\theoremstyle{definition}
\newtheorem{con}[thm]{Construction}
\newtheorem{exegesis}[thm]{Exegesis}
\Crefname{con}{Construction}{Constructions}
\Crefname{defi}{Definition}{Definitions}
\Crefname{fact}{Fact}{Facts}
\Crefname{thm}{Theorem}{Theorems}
\Crefname{exa}{Example}{Examples}
\Crefname{rem}{Remark}{Remarks}
\Crefname{lem}{Lemma}{Lemmas}
\Crefname{cor}{Corollary}{Corollaries}
\Crefname{diagram}{Diagram}{Diagrams}
\begin{document}

\title{Denotational semantics of general store and polymorphism}

\author[J.~Sterling]{Jonathan Sterling\lmcsorcid{0000-0002-0585-5564}}
\author[D.~Gratzer]{Daniel Gratzer\lmcsorcid{0000-0003-1944-0789}}
\author[L.~Birkedal]{Lars Birkedal\lmcsorcid{0000-0003-1320-0098}}

\address{Aarhus University}
\email{jsterling@cs.au.dk, gratzer@cs.au.dk, birkedal@cs.au.dk}

\begin{abstract}
  We contribute the first denotational semantics of polymorphic dependent type
  theory extended by an equational theory for general (higher-order)
  reference types and recursive types, based on a combination of guarded
  recursion and impredicative polymorphism; because our model is based on
  \emph{recursively defined semantic worlds}, it is compatible with
  polymorphism and relational reasoning about stateful abstract datatypes. We
  then extend our language with modal constructs for proof-relevant relational
  reasoning based on the \emph{logical relations as types} principle, in which
  equivalences between imperative abstract datatypes can be established
  synthetically. What is new in relation to prior typed denotational models of
  higher-order store is that our Kripke worlds need not be syntactically
  definable, and are thus compatible with relational reasoning in the heap. Our
  work combines recent advances in the operational semantics of state with the
  purely denotational viewpoint of \emph{synthetic guarded domain theory}.
\end{abstract}

\maketitle

\section{Introduction}

The combination of parametric polymorphism and general reference types in
denotational semantics is notoriously difficult, though neither feature
presents any serious difficulties on its own.

\begin{enumerate}

  \item Parametric polymorphism can be modelled in a \emph{complete internal
    category} \`a la Hyland~\cite{hyland:1988}; for instance, the category of partial
    equivalence relations has ``large'' products over the category of
    assemblies, which contains the assembly of all partial equivalence
    relations.

  \item General reference types can be modeled via a Kripke world-indexed state
    monad, where the worlds assign \emph{syntactic} types to locations in the
    heap, as in the possible worlds model of thunk storage by
    Levy~\cite{levy:2003:book,levy:2002}.

\end{enumerate}

The two techniques described above are not easily combined: in order to support
parametric reasoning in the presence of reference types, it is necessary at a
minimum for the Kripke worlds to assign \emph{semantic} types to locations
rather than only syntactic types. But a semantic type should itself be a family
of (sets, predomains, \etc) indexed in Kripke worlds; thus one attempts to
solve a domain equation of that defines a preorder $\WICat$ of Kripke worlds simultaneously with
the category of functors~\cite{reynolds:1981,oles:1986} from $\WICat$ to the category of predomains:
\[
  \WICat \cong {\Con{Loc} \finto \Con{SemType}}
  \qquad
  \Con{SemType} \cong \brk{\WICat,\Con{Predomain}}
  \tag{\textcolor{red}{\textasteriskcentered}}
\]

There are already a few problems with the ``domain equation'' presented above.
For one, the collection of predomains is not itself a predomain, but it could
be replaced by the domain of finitary projections of some universal
domain~\cite{coquand-gunter-winskel:1994}. The more fundamental problem is
that it remains unclear how to interpret type connectives on $\Con{SemType}$;
for instance, Birkedal, St\o{}vring, and
Thamsborg~\cite[\S5]{birkedal-stovring-thamsborg:2010} provide an explicit
counterexample to demonstrate that a na\"ive interpretation of the reference
type cannot coexist with recursive types in the presence of semantic worlds.

\subsection{Step-indexing and guarded domain theory}\label{sec:guarded-domain-theory}

The difficulties outlined above have been side-stepped by the introduction of
\emph{step-indexing}~\cite{ahmed:2004,appel-mellies-richards-vouillon:2007}
and its semantic counterpart \emph{metric/guarded domain
theory}~\cite{america-rutten:1987,rutten-turi:1992,brssty:2011}, which proceed
by stratifying recursive definitions in their finite approximations. This
stratification was axiomatized by Birkedal~\etal~\cite{bmss:2011} in \DefEmph{synthetic
guarded domain theory} / SGDT via the \DefEmph{later modality}
$\Mor[\Ltr]{\mathscr{S}}{\mathscr{S}}$ which comes equipped with a point
$\Mor[\Next]{\ArrId{\mathscr{S}}}{\Ltr}$; the standard model of SGDT is the topos
of trees $\mathscr{S} = \Psh{\omega}$ where the later modality is defined like
so:
\[
  \prn{\Ltr{A}}_n = \Lim{k < n} A_k
  \qquad
  \Next[A]^nx = k\mapsto  x\Sub{\vert k}
\]

In the setting of SGDT, any recursive definition has a unique fixed
point if its recursive variables are guarded by an application of $\Ltr$. In
categorical terms, $\mathscr{S}$ is \emph{algebraically compact} with respect
to \emph{locally contractive} endofunctors. Letting $\TpSet$ be a type
universe in a model $\mathscr{S}$ of synthetic guarded domain theory, it is
easy to solve the following approximate domain equation:
\[
  \mathbb{W} \cong \Con{Loc}\finto {\color{RegalBlue}\Ltr}\Con{SemType}\qquad
  \Con{SemType} \cong \brk{\mathbb{W}, \TpSet}
\]

The preorder $\mathbb{W}$ of Kripke worlds and the collection of semantic types
$\Con{SemType}$ both exist as objects of $\mathscr{S}$; it is even possible to
define a suitable connective $\Mor[\Con{ref}]{\Con{SemType}}{\Con{SemType}}$.
We may also define an indexed type of heaps $\mathbb{H} : \mathbb{W}\to
\TpSet$; unfortunately it is quite unclear to define the
\emph{indexed state monad} for $\mathbb{H}$ as a connective on $\Con{SemType}$.
For instance, the standard definition of the indexed state monad~\cite{plotkin-power:2002,levy:2004,power:2011}
runs into obvious size problems given that
$\Con{SemType}\not\in \TpSet$ and thus $\WICat\not\in\TpSet$:
\[
  \prn{\TpT{A}}w =
  \Prod{\color{red} w'\geq w}
  \mathbb{H}w' \to
  \Sum{\color{red} w''\geq w'}
  \mathbb{H}w''\times Aw''
  \tag{\textcolor{red}{\textasteriskcentered}}
\]

Indeed, the definition above cannot be executed in the standard model
$\mathscr{S} = \Psh{\omega}$ when $\TpSet$ is the Hofmann--Streicher lifting of
a Grothendieck universe from $\SET$. This problem is side-stepped in
\emph{operational} models, where $\TpSet$ is replaced by a set of predicates on
syntactical expressions of the programming language being modeled and $\TpT*$
is defined by a guarded version of weakest preconditions as in the original
paper of Birkedal \etal~\cite{bmss:2011}:
\begin{align*}
  \TpT{A}w &= \Compr{u \in \Con{Val}}{
    \forall w' \geq w,
    h\in \mathbb{H}w'.
    \Con{wp}\prn{w',h,u\TmUnit} \brc{\prn{w'',h',v}\mapsto  v\in Aw''}
  }
  \\
  \Con{wp}\prn{w,h,e} \brc{\Phi} &=
  \prn{e\in\Con{Val}\land \Phi\prn{w,h,e}}
  \lor
  \exists w', h', e'.
  \prn{h;e}\mapsto \prn{h';e'}
  \land
  \Ltr \Con{wp}\prn{w',h'} \brc{\Phi}
\end{align*}

Operational methods worked in $\mathscr{S}$ \emph{only because} the set of
predicates is a complete lattice: we may compute the join and intersection of
arbitrarily large families of predicates. This observation will ultimately form
the basis for our own non-operational solution; rather than giving up
on denotations, we will work in a \emph{different} model of synthetic
guarded domain theory that contains an \DefEmph{impredicative universe}
$\TpSet$, \ie one closed under universal and (thus) existential types \`a la
System~F; then we can define the indexed state monad as follows
without encountering size problems:
\[
  \prn{\TpT{A}}w =
  {\color{RegalBlue}\Forall{w'\geq w}} \mathbb{H}w' \to
  {\color{RegalBlue}\Exists{w''\geq w'}} \mathbb{H}w''\times Aw''
\]

This approach raises the question: \emph{does there in fact exist a model of
synthetic guarded domain theory with an impredicative universe?} We answer in
the affirmative, constructing a model in presheaves on a well-founded order
internal to a \DefEmph{realizability topos}~\cite{van-oosten:2008}. Our model
of SGDT is thus an instance of the relative-topos-theoretic generalization of
synthetic guarded domain theory introduced by Palombi and Sterling~\cite{palombi-sterling:2023}.

\subsection{This paper: impredicative guarded dependent type theory with reference types}

Our denotational model of higher-order store was immediately suited for
generalization to dependent types; this scalability is one of many advantages
our abstract category-theoretic methods.
Justified by this model, we have defined Impredicative Guarded Dependent Type
Theory (\iGDTT{}) and its extension with general reference types (\iGDTTRef{})
and their equational theory.

\iGDTTRef{} is the first language to soundly combine ``full-spectrum''
dependently typed programming and equational reasoning with both higher-order
store and recursive types. The necessity of \emph{semantics} for higher-order
store with dependent types is not hypothetical: both Idris~2 and Lean~4 are
dependently typed, higher-order functional programming
languages~\cite{brady:2021,de-moura-ullrich:2021} that feature a Haskell-style
\texttt{IO} monad with general \texttt{IORef} types~\cite{jones:2001}, but
until now these features have had no semantics.  In fact, our investigations
have revealed a potential problem in the \texttt{IO} monads of both Idris and
Lean: as higher-order store seems to be inherently impredicative, it may not
make sense to close multiple nested universes under the \texttt{IO}
monad~\cite{coquand:1986}.

Many real-world examples require one to go \emph{beyond} simple
equational reasoning; to this end, we defined an extension \iGDTTRefLRAT{}
with constructs for synthetic, proof-relevant relational reasoning for data
abstraction and weak bisimulation based on the \DefEmph{logical relations as
types} (LRAT) principle of Sterling and Harper~\cite{sterling-harper:2021}, which axiomatizes a
generalization of the parametricity translation of dependent type
theory~\cite{bernardy-jansson-paterson:2012,bernardy-moulin:2012}.

\subsection{Discussion of related work}

\NewDocumentCommand\LNLD{}{\textbf{LNL}\textsubscript{\textsc{d}}}
\NewDocumentCommand\iHTT{}{\textbf{iHTT}}

\subsubsection{Operational semantics of higher-order store}

The most thoroughly developed methods for giving semantics to higher-order
store are based on \emph{operational semantics}~\cite{ahmed:2004};
accompanying the operational semantics of higher-order store is a wealth of
powerful program logics for modular reasoning about higher-order effectful and
even concurrent programs based on higher-order separation
logic~\cite{svendsen-birkedal:2014,iris:2018,bizjak-birkedal:2018}.

We are motivated to pursue denotations for three reasons. First, denotational
methods are amenable to the use of general theorems from other mathematical
fields to solve difficult domain problems, whereas operational methods tend to
require one to solve every problem ``by hand'' without relying on standard
lemmas. Secondly, a good denotational semantics tends to simplify reasoning
about programs, as the exponents of the \textsc{DeepSpec} project have argued
\cite{itrees:popl:2019}. Finally and most profoundly, there is an emerging
need to program directly with \emph{actual spaces}, as in differentiable and
probabilistic programming
languages~\cite{abadi-plotkin:2019,vakar-kammar-staton:2019}.

Although operational and denotational semantics are different in both purpose
and technique, there is nonetheless a rich interplay between the two
traditions. First of all, the idea of step-indexing and its approximate
equational theory lies at the heart of our denotational semantics; second, the
wealth of results in operationally based models and program logics for
higher-order store suggest several areas for future work in our denotational
semantics. For instance, it would be interesting to rebase a program logic such
as Iris atop our synthetic denotational model; likewise, several perationally
based
works~\cite{ahmed-dreyer-rossberg:2009,dreyer-neis-rossberg-birkedal:2010}
suggest many improvements to our notion of \emph{semantic world} to support
richer kinds of correspondence between programs.

\subsubsection{Denotational semantics of state}

There is a rich tradition of denotational semantics of state, ranging from a
single reference cell~\cite{moggi:1991} to first-order
store~\cite{plotkin-power:2002} and storage of
pointers~\cite{kammar-levy-moss-staton:2017}, and even higher-order store with
syntactic worlds~\cite{levy:2004}. None of these approaches is compatible
with relational reasoning for polymorphic/abstract types, as they all
rely on the semantic heap being classified by syntactically definable types.

\paragraph{Untyped metric semantics and approximate locations}

A somewhat different \emph{untyped} approach to the denotational semantics of
higher-order store with recursive types and polymorphism was pioneered by
Birkedal, St\o{}vring, and Thamsborg~\cite{birkedal-stovring-thamsborg:2010} in
which one uses metric/guarded domain theory to define a \emph{universal domain}
$\Con{D}$, and then develops a model of System~\FMuRef{} in predicates on
$\Con{D}$. Because predicates form a complete lattice, it is possible to
interpret the state operations as we have discussed in
\cref{sec:guarded-domain-theory}. An important aspect of this class of models
is the presence of \emph{approximate locations}, which \opcit{} have shown to
be non-optional.\footnote{Indeed, even in our own model the presence of
approximate locations can be detected using Thamsborg's observation on the
correspondence between metric and ordinary domain
theory~\cite[Ch.~9]{thamsborg:2010}.}

\paragraph{Comparison}

Our own model resembles a synthetic version of that of Birkedal, St\o{}vring,
and Thamsborg~\cite{birkedal-stovring-thamsborg:2010}, but there are some
important differences. Our model is considerably more abstract and more general
than that of \opcit; whereas the cited work must solve a very complex metric
domain equation to construct a universal domain, we may use the realizability
topos generated by \emph{any} partial combinatory algebra, and thus we do not
depend on any specific choice of universal domain.  Moreover, we argue that
ours is a \emph{typed} model: we depend only on the combination of guarded
recursion and an impredicative universe, and although the principle source of
such structures is realizability on untyped partial combinatory
algebras~\cite{lietz-streicher:2002}, we do not depend on any of the details of
realizability. Because of the generality and abstractness of our model
construction, scaling up to full dependent type theory has offered no
resistance.

\subsubsection{Higher-order store in Impredicative Hoare Type Theory}

Realizability models of impredicative type theory have been used before to give
a model of so-called Impredicative Hoare Type Theory (\iHTT), an extension of
dependent type theory with a monadic ``Hoare'' type for stateful computations
with higher-order store~\cite{svendsen-birkedal-nanevski:2011}.  However, in
contrast to our monadic type in \iGDTTRef, \iHTT{} does not support equational
reasoning about computations: all elements of a Hoare type are equated and one
can only reason via the types. Moreover, the Hoare type in \opcit only supports
untyped locations with substructural reference capabilities that change as
the heap evolves (the so-called ``strong update''); thus non-Hoare types are not indexed in worlds.

\subsubsection{Linear dependent type theory}

Another approach to the integration of state into dependent type theory is
contributed by Krishnaswami, Pradic, and
Benton~\cite{krishnaswami-pradic-benton:2015}, who develop an adjoint
linear--non-linear dependent type theory \LNLD{} with a realizability model in
partial equivalence relations. Like \iHTT, the theory of \opcit uses
capabilities for references with strong update; unlike \iHTT{}, the account of
store in \LNLD{} enjoys a rich equational theory. What is missing from \LNLD{}
is any account of general recursion, which normally arises from higher-order
store via backpatching or Landin's Knot; indeed, \LNLD{} carefully avoids the
general recursive aspects of higher-order store via linearity. One of the main
advances of our own paper over both \iHTT{} and \LNLD{} is to model a
dependently typed equational theory for \emph{full} store with general
recursion.

\subsubsection{Effectful dependent type theory}

We have not attempted any non-trivial interaction between computational
effects and the dependent type structure of our language, in contrast to the
work of P\'edrot and Tabareau~\cite{pedrot-tabareau:2020} on dependent call-by-push-value: our
\iGDTTRef{} language can be thought of as a purely functional programming
language extended with a monad for stateful programming with Haskell-style
\texttt{IORef}s.

\subsection{Structure and contributions of this paper}

Our contributions are as follows:

\begin{itemize}

  \item In \textbf{\cref{sec:igdtt}} we introduce \DefEmph{impredicative
    guarded dependent type theory} (\iGDTT) as a user-friendly metalanguage for
    the denotational semantics of languages involving polymorphism, general
    reference types and recursive types.  In \textbf{\cref{sec:easy}} as a case
    study, we construct a simple denotational model of Monadic~System~\FMuRef,
    a language with general reference types, polymorphic types, and recursive
    types in \iGDTT{}. Finally in \textbf{\cref{sec:coq}} we describe our Coq
    library for \iGDTT{} in which we have formalized the higher-order state monad and
    reference types.

  \item In \textbf{\cref{sec:dependent}} we describe \Alert{\iGDTTRef{}}, an
    extension of \iGDTT{} with general reference types and a monad for
    higher-order store. \iGDTTRef{} is thus a full-spectrum dependently typed
    programming language with support for higher-order effectful programming.
    To illustrate the combination of higher-order store with dependent types,
    we define and prove the correctness of an implementation of factorial
    defined via \DefEmph{Landin's knot} / backpatching.

  \item In \textbf{\cref{sec:igdtt-lrat}} we describe \Alert{\iGDTTRefLRAT{}}, an
    extension of \iGDTTRef{} with constructs for synthetic proof-relevant
    relational reasoning based on the \DefEmph{logical relations as types}
    principle~\cite{sterling-harper:2021}. \iGDTTRefLRAT{} can be used to
    succinctly exhibit bisimulations between higher-order stateful computations
    and abstract data types, which we demonstrate in two case studies involving
    imperative counter implementations
    (\cref{sec:case-study:local-references,sec:case-study:adt}).

  \item In \textbf{\cref{sec:semantics}}, we describe general results for
    constructing models of \iGDTT{} and \iGDTTRefLRAT, with concrete
    instantiations given by a combination of realizability and internal
    presheaves. The results of this section justify the consistency of
    \iGDTTRefLRAT{} as a language for relational reasoning about higher-order
    stateful computations.

  \item In \textbf{\cref{sec:concl}} we conclude with some reflections on directions for future work.

\end{itemize}

\subsection{Acknowledgments}

We are thankful to Frederik Lerbjerg Aagaard, Robert Harper, Rasmus
M\o{}gelberg for helpful conversations concerning this project.
This work was supported in part by a Villum Investigator grant (no. 25804),
Center for Basic Research in Program Verification (CPV), from the VILLUM
Foundation.
Jonathan Sterling is funded by the European Union under the Marie
Sk\l{}odowska-Curie Actions Postdoctoral Fellowship project
\href{https://cordis.europa.eu/project/id/101065303}{\emph{TypeSynth: synthetic
methods in program verification}}. Views and opinions expressed are however
those of the authors only and do not necessarily reflect those of the European
Union or the European Commission. Neither the European Union nor the granting
authority can be held responsible for them.
\NewDocumentCommand\StateEquations{}{
  \ebrule{
    \hypo{l:\TpRef{A}}
    \hypo{u:A}
    \infer2{
      \TmSet{A}{l}{u};
      \TmGet{A}{l}
      =
      \Alert{\TmStep};
      \TmSet{A}{l}{u};
      \TmRet{u}
      :
      \TpT{A}
    }
  }
  \and
  \ebrule{
    \hypo{l:\TpRef{A}}
    \infer1{
      \prn{\TmBind{x}{\TmGet{A}{l}} \TmSet{A}{l}{x}}
      =
      \Alert{\TmStep}
      :
      \TpT\TpUnit
    }
  }
  \and
  \ebrule{
    \hypo{u,v:A}
    \infer1{
      \prn{
        \TmBind{x}{\TmNew{A}{u}}
        \TmSet{A}{x}{v};
        \TmRet{x}
      }
      =
      \TmNew{A}{v}
      : \TpT\prn{\TpRef{A}}
    }
  }
  \and
  \ebrule{
    \hypo{l:\TpRef{A}}
    \hypo{u,v:A}
    \infer2{
      \TmSet{A}{l}{u};\TmSet{A}{l}{v}
      =
      \TmSet{A}{l}{v}
      :\TpT\TpUnit
    }
  }
}

\section{Impredicative guarded dependent type theory}\label{sec:igdtt}

In this section we describe an extension of \emph{guarded dependent type
theory} with an impredicative universe; guarded dependent type theory is a
dependently typed interface to synthetic guarded domain theory. The purpose of
this \DefEmph{impredicative guarded dependent type theory} (\iGDTT{}) is to
serve as a metalanguage for the denotational semantics of programming languages
involving general reference types, just as ordinary guarded dependent type
theory can be used as a metalanguage for denotational semantics of programming
languages with recursive functions~\cite{paviotti-mogelberg-birkedal:2015} and
recursive types~\cite{mogelberg-paviotti:2016}.

\subsection{Universe structure: quantifiers and reflection}

The core \iGDTT{} language is modelled off of Martin-L\"of type theory with a
pair of impredicative base universes $\prn{\TpProp\subseteq\TpSet}\in
\TpType_0\in\TpType_1\in\ldots$, as in the version of the calculus of inductive
constructions with \emph{impredicative Set}.\footnote{We mean that every
element of $\TpProp$ is also classified by $\TpSet$, but we do not assert that
$\TpProp$ is classified by $\TpSet$.} The universes $\TpSet,\TpType_i$ are
closed under dependent products, dependent sums, finite enumerations $\brk{n}$,
inductive types (W-types), and extensional equality types with equality
reflection.
We assert that $\TpProp$ is both proof-irrelevant and univalent, and moreover
closed under extensional equality types.  Proof-irrelevance means that for any
$P:\TpProp$ and $p,q:P$ we have $p=q$; univalence means that if
$P\leftrightarrow Q$ then $P=Q$.\footnote{This is also called
\emph{propositional extensionality}.}

What makes $\TpSet,\TpProp$ impredicative is that we
assert an additional connective for \emph{universal types} with abstraction,
application, $\beta$-, and $\eta$-laws.
\begin{mathpar}
  \ebrule[impredicativity]{
    \hypo{\mathbb{S}\in \brc{\TpSet,\TpProp}}
    \hypo{A : {\color{RegalBlue}\TpType_i}}
    \hypo{x:A\vdash Bx : \mathbb{S}}
    \infer3{\Forall{x : A}{Bx} : \mathbb{S}}
  }
\end{mathpar}

The universal type automatically gives rise to an impredicative encoding of
\emph{existential} types. The na\"ive encoding $\Exists{x:A}{Bx}
\eqdef\Sup{\color{red}*} \Forall{C : \mathbb{S}}\prn{\Forall{x:A}\prn{Bx\to C}}\to C$
does not in fact have the correct universal property as its $\eta$-law holds
only up to parametricity, but we may use the method of
Awodey, Frey, and Speight~\cite{awodey-frey-speight:2018} to define a correct version of the existential
type. It will be simplest to do so in two steps: first define the
\emph{reflection} $\color{RegalBlue}\vvrt{-}\Sub{\mathbb{S}} : \TpType_i\to \mathbb{S}$, and then
apply this reflection to the dependent sum.

\begin{thm}
  The inclusion $\EmbMor{\mathbb{S}}{\TpType_i}$ has a left adjoint
  $\vvrt{-}\Sub{\mathbb{S}} : \TpType_i \to \mathbb{S}$.
\end{thm}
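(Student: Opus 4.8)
The plan is to construct the reflection $\vvrt{-}_{\mathbb{S}}$ using an impredicative encoding refined by the Awodey--Frey--Speight technique, and to verify that it is left adjoint to the inclusion. The naïve candidate is the System~F-style encoding
\[
  \vvrt{A}_{\mathbb{S}}^{\mathrm{naive}} \eqdef \Forall{C : \mathbb{S}}\prn{\prn{A\to C}\to C},
\]
which comes with an obvious unit $\eta_A : A \to \vvrt{A}_{\mathbb{S}}^{\mathrm{naive}}$ sending $a$ to $\lambda C.\,\lambda k.\,k\,a$, and a recursor into any $\mathbb{S}$-type. The problem, exactly as noted in the excerpt for existentials, is that this encoding satisfies its universal property only up to parametricity: the $\beta$-rule holds, but uniqueness/$\eta$ fails definitionally, so $\eta_A$ is not an isomorphism onto an appropriately characterised subtype and the adjunction does not hold on the nose.

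First I would therefore cut down the naïve encoding to its ``parametric'' elements following Awodey, Frey, and Speight. The idea is to impose internally the naturality/dinaturality condition that a genuinely parametric polymorphic function would satisfy, using the available type structure ($\Pi$, $\Sigma$, and extensional equality with reflection in $\TpType_i$) to express it as a $\TpProp$-valued predicate on $\vvrt{A}_{\mathbb{S}}^{\mathrm{naive}}$. Concretely, I would define $\vvrt{A}_{\mathbb{S}}$ as a dependent sum of the naïve encoding together with a proof that the element is natural in $C$ with respect to all maps $C\to C'$ in $\mathbb{S}$; because $\TpProp\subseteq\TpSet$ is impredicative and proof-irrelevant, this $\Sigma$-type lands back in $\mathbb{S}$ and carries no extra computational content. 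The unit $\eta_A$ factors through this subtype since point-evaluations $\lambda C.\lambda k.k\,a$ are manifestly natural.

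Next I would establish the universal property, i.e.\ that for every $B:\mathbb{S}$, precomposition with $\eta_A$ gives a bijection $\Hom_{\mathbb{S}}\prn{\vvrt{A}_{\mathbb{S}}, B} \cong \prn{A\to B}$, naturally in $B$. The forward map is restriction along $\eta_A$; the inverse sends $f : A\to B$ to the map $t \mapsto \pi_1 t\,B\,f$ (instantiate the polymorphic element at $B$ and apply it to $f$). The $\beta$-direction, that restricting this extension recovers $f$, is immediate from the computation rule of the encoding. The harder direction is $\eta$: that every $\mathbb{S}$-morphism $g:\vvrt{A}_{\mathbb{S}}\to B$ equals the extension of $g\circ\eta_A$. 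This is precisely where the naturality component of the $\Sigma$-type earns its keep — the Awodey--Frey--Speight argument shows that a parametric element is determined by its behaviour on $\eta$-images, so the embedded naturality witness forces $g\,t = \pi_1 t\,B\,\prn{g\circ\eta_A}$. I expect this uniqueness step to be the main obstacle: one must choose the instantiating map carefully (typically evaluating at $C = \vvrt{A}_{\mathbb{S}}$ itself, or at a suitable graph/comma construction) and then invoke the naturality equation, using equality reflection to turn the resulting propositional identity into a definitional one so that the adjunction holds strictly rather than merely up to isomorphism.

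Finally, with the natural bijection $\Hom_{\mathbb{S}}\prn{\vvrt{A}_{\mathbb{S}},B}\cong\Hom_{\TpType_i}\prn{A, B}$ in hand for all $B:\mathbb{S}$, I would package it as the stated adjunction $\vvrt{-}_{\mathbb{S}}\dashv \iota$, checking that the bijection is natural in $A$ as well (functoriality of $\vvrt{-}_{\mathbb{S}}$ on morphisms of $\TpType_i$ follows from the universal property). The two cases $\mathbb{S}=\TpSet$ and $\mathbb{S}=\TpProp$ are handled uniformly, the only difference being that in the $\TpProp$ case proof-irrelevance makes the uniqueness step automatic.
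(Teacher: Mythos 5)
Your proposal matches the paper's construction essentially exactly: the paper likewise forms the wild impredicative encoding $\Forall{C:\mathbb{S}}\prn{A\to C}\to C$ and then carves out the naturality-constrained subtype \`a la Awodey--Frey--Speight, writing the constraint $\Con{ok}_A$ with the impredicative universal and equality types so that the $\Sigma$-type lands back in $\mathbb{S}$. The paper in fact stops after giving the definition and defers the verification of the universal property to the cited work, so your sketch of the $\beta$/$\eta$ argument goes somewhat further than the paper's own proof while following the same route.
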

\begin{proof}
  For reasons of space, we give only the definition of
  $\vvrt{-}\Sub{\mathbb{S}}$. The reflection is constructed in two steps; first
  we define the ``wild'' reflection
  $\vrt{-}\Sub{\mathbb{S}} : \TpType_i\to\mathbb{S}$ by an impredicative
  encoding, which is unfortunately too unconstrained to have the universal
  property of the left adjoint:

  \iblock{
    \mrow{
      \vrt{-}\Sub{\mathbb{S}} : \TpType_i\to\mathbb{S}
    }
    \mrow{
      \vrt{A}\Sub{\mathbb{S}} \eqdef \Forall{C:\mathbb{S}}\prn{A\to C}\to C
    }
  }

  We therefore constrain $\vrt{A}\Sub{\mathbb{S}}$ by a naturality condition, encoded as a
  structure $\Con{ok}_A : \vrt{A}\Sub{\mathbb{S}}\to \mathbb{S}$ defined using universal and equality
  types like so:

  \iblock{
    \mrow{
      \Con{ok}_A : \vrt{A}\Sub{\mathbb{S}}\to\mathbb{S}
    }
    \mrow{
      \Con{ok}_A \alpha \eqdef
      \Forall{C,D:\mathbb{S}}
      \Forall{f : C \to D}
      \Forall{h : A \to C}
      \alpha\,D\, \prn{f \circ h} = f \prn{\alpha\, C\,h}
    }
    \row
    \mrow{
      \vvrt{-}\Sub{\mathbb{S}} : \TpType_i\to\mathbb{S}
    }
    \mrow{
      \vvrt{A}\Sub{\mathbb{S}} \eqdef \Sum{\alpha : \vrt{A}\Sub{\mathbb{S}}}{\Con{ok}_A\alpha}
    }
  }

  This completes the construction of the reflection.
\end{proof}

We will write $\Con{pack} : A \to \vvrt{A}\Sub{\mathbb{S}}$ for the unit of the reflection; via
the universal property of the adjunction, maps into types classified by $\mathbb{S}$ can be defined
by pattern matching, \eg $\Kwd{let}\,\Con{pack}\,x = u\,\Kwd{in}\ v$; by virtue
of the constraint $\Con{ok}_A$, these destructuring expressions satisfy a desirable
$\eta$-law.
With the reflection in hand, it is possible to give a correct encoding of the
existential type:
\begin{mathpar}
  \ebrule{
    \hypo{A : {\color{RegalBlue}\TpType_i}}
    \hypo{x:A\vdash Bx : \mathbb{S}}
    \infer2{
      \Exists{x:A}{Bx} \eqdef
      \vvrt{\Sum{x:A}{Bx}}\Sub{\mathbb{S}}
    }
  }
\end{mathpar}

The na\"ive impredicative encoding would have worked for $\TpProp$ because it
is already proof-irrelevant; but for uniformity, we present the reflections for
$\TpProp,\TpSet$ simultaneously.

\subsection{The later modality and delayed substitutions}

The remainder of \iGDTT{}---the guarded fragment---is the same in prior
presentations~\cite{bgcmb:2016}; to summarize, we have a type operator $\Ltr$
called the \DefEmph{later modality} equipped with a fixed-point operator. We
defer a proper exposition of guarded type theory to
Bizjak~\etal~\cite{bgcmb:2016}; intuitively, however, the type $\Ltr A$
contains elements of $A$ which only become available one `step' in the future.
We have chosen to present $\Ltr$ using the \DefEmph{delayed substitutions}
\fbox{$\xi\leadsto\Xi$} of \opcit:
\begin{mathpar}
  \ebrule{
    \hypo{\xi \leadsto \Xi}
    \hypo{
      \Xi \vdash A\ \mathit{type}
    }
    \infer2{
      \Ltr\brk{\xi}.A\ \mathit{type}
    }
  }
  \and
  \ebrule{
    \hypo{\xi\leadsto\Xi}
    \hypo{\Xi\vdash a:A}
    \infer2{\Next*\brk{\xi}.{a} : \Ltr\brk{\xi}.{A}}
  }
  \and
  \ebrule{
    \hypo{\vphantom{\Xi}}
    \infer1{\cdot\leadsto\cdot}
  }
  \and
  \ebrule{
    \hypo{\xi \leadsto \Xi}
    \hypo{a : \Ltr\brk{\xi}.A}
    \infer2{\prn{\xi, x\leftarrow a} \leadsto \Xi, x:A}
  }
\end{mathpar}

The delayed substitution attached to the introduction and formation rules
allows an element of $\Ltr A$ to strip away the $\Ltr$-modalities from a list
of terms, thereby rendering $\Ltr$ an applicative functor in the sense of
McBride and Paterson~\cite{mcbride-paterson:2008}:
\[
  f \circledast a \eqdef \Next*\brk{x \leftarrow f, y \leftarrow a}.x{y}
\]

All universes $\mathbb{X}\in\brc{\TpSet,\TpProp,\TpType_i}$ are closed under $\Ltr$.
We have equational laws governing both delayed substitutions and the
$\Ltr/\Next*$ constructors; we will not belabor them here, referring instead to
\opcit for a precise presentation. In the case of empty delayed substitutions
$\xi = \cdot$, we will write $\Ltr{A}$ and $\Next{a}$ for $\Ltr\brk{\cdot}.A$
and $\Next*\brk{\cdot}.a$ respectively. A guarded fixed point combinator
is included:
\begin{mathpar}
  \ebrule{
    \hypo{x : \Ltr{A} \vdash fx : A}
    \infer1{\Con{gfix}\, x. fx : A}
  }
  \and
  \ebrule{
    \hypo{x : \Ltr{A} \vdash fx : A}
    \infer1{\Con{gfix}\, x. fx  = f\,\prn{\Next\prn{\Con{gfix}\,x.fx}}: A}
  }
\end{mathpar}

\subsection{Guarded domains and the lift monad}\label{sec:guarded-domains}

By construction, all types in \iGDTT{} support a guarded fixed-point operator
$\prn{\Ltr A \to A} \to A$; without an algebra structure $\Ltr A \to A$,
however, such a fixed point operator is insufficient to interpret general
recursion. We will define a \DefEmph{guarded domain} to be a type equipped with
exactly such an algebra structure below.

\begin{defi}
  We define a \DefEmph{guarded domain} to be a type $A$ together with a
  function $\vartheta_A : \Ltr{A}\to A$, \ie an algebra for the later modality
  viewed as an endofunctor.
\end{defi}

Given a guarded domain $A$, we define the \DefEmph{delay map} $\delta_A : A \to
A$ to be the composite $\vartheta_A\circ\Next*$.  A guarded domain can be equipped with
a fixed point combinator $\mu_A : \prn{A\to A}\to A$ satisfying
$\mu_Af = \delta_A\,\prn{f\,\prn{\mu_A f}}$. In particular, we define
$\mu_A f \eqdef \Con{gfix}\, x.\, \vartheta_A\,\prn{\Next*[A]\brk{z\leftarrow x}.{fz}}$.

\begin{exa}
  Each universe $\mathbb{X}\in\brc{\TpProp,\TpSet,\TpType_i}$ carries the structure of
  a guarded domain, as we may define
  $\vartheta\Sub{\mathbb{X}} A \eqdef \Ltr\brk{X\leftarrow A}.X$.
  Note that we have $\delta\Sub{\mathbb{X}}A = \Ltr{A}$.
\end{exa}

Because the universe is itself a guarded domain, the same fixed point
combinators can be used to interpret \emph{both} recursive programs \emph{and}
recursive types as pointed out by Birkedal and
M\o{}gelberg~\cite{birkedal-mogelberg:2013}; this is a significant improvement
over ordinary (synthetic) domain theory, where more complex notions of
algebraic compactness are required to lift recursion to the level of types.

\NewDocumentCommand\GDom{m}{\Kwd{GDom}\prn{#1}}

\begin{con}[Guarded lift monad]
  We will write $\GDom{\mathbb{X}}$ for the universe of guarded domains in a universe
  $\mathbb{X}$; the forgetful functor
  $\Mor[\Con{J}]{\GDom{\mathbb{X}}}{\mathbb{X}}$ has a left adjoint
  ${\color{RegalBlue}\TpL*} \dashv \Con{J}$ that freely \emph{lifts} a type to
  a guarded domain, which we may compute by taking a guarded fixed point:
  \[
    \TpL{A} = A + \Ltr\TpL{A}
    \qquad\qquad
    \vartheta\Sub{\TpL{A}} \eqdef \Con{inr}
  \]

  Unfolding definitions, we have solved the guarded domain equation $\TpL{A} =
  A + \Ltr\TpL{A}$; we will write $\eta : A \to \TpL{A}$ for the left
  injection.
  When it causes no confusion, we will leave the forgetful functor $\Con{J}$
  implicit; thus we refer to $\Mor[\TpL*]{\mathbb{X}}{\mathbb{X}}$ as the
  \DefEmph{guarded lift monad}.
\end{con}

\subsection{Case study: denotational semantics of Monadic System~\texorpdfstring{\FMuRef}{F/mu/ref}}\label{sec:easy}

Already we have enough machinery to explore a simple possible worlds model of
System~\FMuRef, a polymorphic language with general reference types. Even in this
simple case, the tension between polymorphism and general reference types is
evident and we require both the impredicative and guarded-recursive features of
\iGDTT{} to construct the \DefEmph{semantic worlds} that underly the model.
We emphasize that the entire construction takes place internally to \iGDTT{}.

For the moment, we only sketch the interpretation of types in our model. We will
return to this point in \cref{sec:semantics}, when we use a similar definition
of semantic worlds to construct a model for a version of full \iGDTT{} extended
with general references.

\subsubsection{Recursively defined semantic worlds and heaps}\label{sec:worlds-and-heaps}

We define a function $\Mor[\Con{World}]{\TpType_0}{\TpType_0}$ taking a type
$A:\TpType_0$ to the preorder of finite maps $\Nat\finto A$, where the order is
given by graph inclusion.
Next we define $\mathcal{T}:\TpType_0$ to be the guarded fixed point of the
operator that sends $A:\Ltr\TpType_0$ to the type of functors
$\brk{\Con{World}\prn{\Ltr\brk{z\leftarrow A}z},\TpSet}$; thus we have
$\Alert{\mathcal{T}=\brk{\Con{World}\prn{\Ltr{\mathcal{T}}},\TpSet}}$. Finally we
define $\WICat$ to be the preorder $\Con{World}\prn{\Ltr{\mathcal{T}}}$. Thus
we have solved the domain equation $\Alert{\WICat=\Nat\finto\Ltr\brk{\WICat,\TpSet}}$ and we have $\Alert{\mathcal{T}=\Ob{\brk{\WICat,\TpSet}}}$.
The object $\mathcal{T}$ can be seen to be a guarded domain, setting
$
  \vartheta\Sub{\mathcal{T}}A \eqdef \lambda w. \Ltr\brk{X\leftarrow A}.Xw
$.

We extend the above to a functor
$\Mor[\Con{heap}]{\OpCat{\WICat}}{\brk{\WICat,\TpSet}}$ as follows, writing
$\OpCat{\WICat}$ for the opposite of the preorder $\WICat$, by defining
$
  \Con{heap}\,w w' \eqdef
  \Prod{i\in \vrt{w}}
  \vartheta\Sub{\mathcal{T}}\, \prn{wi}\,w'
$.
Here we have used $\vrt{w}$ to denote the support of the finite mapping $w$.
We will write $\HICat_w$ for $\Con{heap}\,w\,w$ and $\HICat:\TpType_0$ for the
dependent sum $\Sum{w:\Ob{\WICat}}\HICat_w$.

\subsubsection{Higher-order state monad and reference types}\label{sec:monad}

We now define a strong monad $\TpT*$ on $\brk{\WICat,\TpSet}$, a variation on
the standard state monad suitable for computations involving state and general
recursion. Recalling that $\TpL*:\TpSet\to\TpSet$ is the guarded lift monad, we
define the higher-order state monad $\TpT*$ below:

\iblock{
  \mrow{
    \TpT*:\brk{\WICat,\TpSet}\to\brk{\WICat,\TpSet}
  }
  \mrow{
    \prn{\TpT{A}}w \eqdef
    \Forall{w'\geq w}
    {
      \HICat\Sub{w'}\to
      \TpL*
      \Exists{w''\geq w'}
      \HICat\Sub{w''}\times Aw''
    }
  }
}

\begin{thm}\label{thm:t-monad}
  $\TpT*$ is a strong monad and each $\prn{\TpT{A}}w$ is a guarded domain.
\end{thm}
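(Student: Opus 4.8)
The plan is to treat the two assertions separately: the guarded-domain structure follows from closure properties of guarded domains, while the monad structure is built by hand and its laws are reduced to those of the lift monad $\TpL*$.

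First I would establish that each $\prn{\TpT A}w$ is a guarded domain. The point is that guarded domains are closed under $\TpSet$-indexed dependent products and under function types $X\to G$ into a guarded domain $G$, in each case by transporting the algebra on the codomain pointwise; and by the guarded lift monad construction every $\TpL{C}$ is a guarded domain with $\vartheta\Sub{\TpL{C}}=\Con{inr}$. Since $\prn{\TpT A}w=\Forall{w'\geq w}\HICat\Sub{w'}\to\TpL*\prn{\Exists{w''\geq w'}\HICat\Sub{w''}\times Aw''}$ is exactly such a dependent product of exponentials into a lift, the composite yields $\vartheta\Sub{\prn{\TpT A}w}$. Concretely, given $r:\Ltr\prn{\TpT A}w$, $w'\geq w$, and $h:\HICat\Sub{w'}$, one forms $\Next*\brk{s\leftarrow r}.{s\,w'\,h}:\Ltr\TpL*\prn{\cdots}$ and applies $\Con{inr}$.

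Next I would construct the monad. The organizing observation is that $\TpT*$ is a possible-worlds refinement of the \emph{state-monad transformer} over $\TpL*$: the prefix $\Forall{w'\geq w}\HICat\Sub{w'}\to(-)$ reads a future world and heap, the existential $\Exists{w''\geq w'}\HICat\Sub{w''}\times(-)$ writes an extended world and heap, and $\TpL*$ is the effect monad being transformed. Unfolding this, the unit sends $a:Aw$ to the computation that at $w'\geq w$ with heap $h$ returns $\eta\prn{\Con{pack}\prn{w',h,a'}}$, where $a'$ is $a$ transported along $w\leq w'$ by functoriality of $A$ and $\eta$ is the unit of $\TpL*$; and the Kleisli extension of a family $f:\Forall{w^*\geq w}Aw^*\to\prn{\TpT B}w^*$ runs $m\,w'\,h$, then uses the bind of $\TpL*$ together with the universal property of $\Exists$ to unpack $\prn{w'',h'',a}$ and continue with $f\,w''\,a\,w''\,h''$, coercing the resulting witness $w'''\geq w''\geq w'$ to $w'''\geq w'$ by transitivity. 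Functoriality of $\TpT*$ in $w$ is precomposition along world inclusions and in $A$ is postcomposition under $\TpL*$ and $\Exists$; the strength $Aw\times\prn{\TpT B}w\to\prn{\TpT(A\times B)}w$ transports the $A$-component forward and pairs it under $\TpL*$, and since $\brk{\WICat,\TpSet}$ is cartesian closed (exponentials exist by impredicativity of $\TpSet$) and $\TpT*$ is given by an internal formula, its coherence laws come for free.

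The hard part will be verifying that the monad laws hold \emph{strictly} rather than merely up to a relation. This depends crucially on the existential $\Exists$ enjoying its genuine $\eta$-law, so that the unpacking in the Kleisli extension cancels definitionally against $\Con{pack}$ and nested unpackings re-associate on the nose; this is precisely what the corrected reflection $\vvrt{-}\Sub{\TpSet}$, with its naturality constraint $\Con{ok}$, supplies, whereas the na\"ive impredicative encoding---whose $\eta$-law holds only up to parametricity---would not suffice. With that in place, the unit and associativity laws reduce to the corresponding laws of $\TpL*$ modulo the dependent bookkeeping of the world order under the existentials and functorial transports, which is the remaining (routine but delicate) obligation to discharge.
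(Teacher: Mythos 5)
Your proposal is correct, but it takes a genuinely different route from the one the paper relies on. The paper states \cref{thm:t-monad} without an inline proof; the intended argument is the one described in \cref{sec:coq}, where $\TpT*$ is obtained as the monad of a \emph{horizontal composite of adjunctions}: the pointwise free--forgetful adjunction for $\Ltr$-algebras (equivalently, guarded interaction trees, specializing to $\TpL*$), followed by $\Exists{\Con{heap}}\dashv\Reix{\Con{heap}}$ and $\Reix{\Con{heap}}\dashv\Forall{\Con{heap}}$ relating $\brk{\OpCat{\WICat},\TpSet}$, $\brk{\HICat,\TpSet}$, and $\brk{\WICat,\TpSet}$. On that route the unit, Kleisli extension, and all monad laws come for free from the universal properties of the adjoints, and the $\Ltr$-algebra structure on each $\prn{\TpT{A}}w$ is inherited from the fact that the leftmost leg lands in a category of algebras. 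You instead build the monad by hand as a possible-worlds state-monad transformer over $\TpL*$ and discharge the laws directly. Your treatment of the guarded-domain claim (closure of $\Ltr$-algebras under $\Forall$ and exponentials, plus $\vartheta\Sub{\TpL{A}}=\Con{inr}$) is exactly right, and you correctly isolate the one point where a hand-rolled construction could silently fail: the unit and associativity laws need the \emph{genuine} $\eta$-law of $\Exists$, which is supplied by the constrained reflection $\vvrt{-}\Sub{\TpSet}$ and not by the na\"ive impredicative encoding. What the adjunction-based route buys is that precisely this delicate bookkeeping---reassociating nested unpackings and the transitivity coercions on world inequalities---is absorbed into the triangle identities, which is why the formalization prefers it; what your route buys is an explicit description of $\Con{ret}$ and $\Con{bind}$ that matches the formulas the paper later writes out in \cref{sec:semantics:kripke}.
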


Next we define a semantic reference type connective
$\TpRef* : \mathcal{T} \to \mathcal{T}$. For each $A : \mathcal{T}$, the type
$\TpRef{A}$ picks out those locations in the current world which
contain elements of type $A$:

\iblock{
  \mrow{
    \TpRef*:\mathcal{T}\to\mathcal{T}
  }
  \mrow{
    \prn{\TpRef{A}}w \eqdef
    \Compr{l\in\vrt{w}\DelimMin{1}}{
      \Next{A} = wl
    }
  }
}

\subsubsection{Synthetic model of Monadic~System~\texorpdfstring{\FMuRef}{F/mu/ref}}

We may now implement a denotational semantics of Monadic~System~\FMuRef{} in
\iGDTT{}; the direct-style call-by-value version of System~\FMuRef can then be
interpreted separately \`a la Moggi in the standard way. We specify the domains
of interpretation for each form of judgment below:
\begingroup\small
\begin{gather*}
  \bbrk{\JdgKCtx{\Xi}} : \TpType\quad
  \bbrk{\JdgTCtx{\Xi}{\Gamma}},\bbrk{\JdgTp{\Xi}{\tau}} : \bbrk{\Xi}\to \mathcal{T}
  \quad
  \bbrk{\JdgTm{\Xi}{\Gamma}{e:\tau}} : \Forall{\xi:\bbrk{\JdgKCtx{\Xi}}} \bbrk{\JdgTCtx{\Xi}{\Gamma}}\xi \to \bbrk{\JdgTp{\Xi}{\tau}}\xi
\end{gather*}
\endgroup

The structure of kind and type contexts is interpreted below using the
cartesian structure of the functor categories $\brk{\WICat,\TpType}$ and
$\brk{\WICat,\TpSet}$ respectively:

\iblock{
  \mrow{
    \bbrk{\JdgKCtx{\cdot}} \eqdef \TpUnit
  }
  \mrow{
    \bbrk{\JdgKCtx{\Xi,\alpha}} \eqdef \bbrk{\JdgKCtx{\Xi}}\times \mathcal{T}
  }
  \row
  \mrow{
    \bbrk{\JdgTCtx{\Xi}{\cdot}}\xi \eqdef \TpUnit
  }
  \mrow{
    \bbrk{\JdgTCtx{\Xi}{\Gamma,x:\tau}}\xi \eqdef \bbrk{\JdgTCtx{\Xi}{\Gamma}}\xi \times \bbrk{\JdgTp{\Xi}{\tau}}\xi
  }
}

Individual type connectives are interpreted below:

\iblock{
  \mrow{
    \bbrk{\JdgTp{\Xi}{\forall \alpha. \tau}}\xi w \eqdef
    \Forall{X:\mathcal{T}}
    \bbrk{\JdgTp{\Xi}{\tau}}\prn{\xi,X} w
  }
  \mrow{
    \bbrk{\JdgTp{\Xi}{\exists \alpha. \tau}}\xi w \eqdef
    \Exists{X:\mathcal{T}}
    \bbrk{\JdgTp{\Xi}{\tau}}\prn{\xi,X} w
  }
  \mrow{
    \bbrk{\JdgTp{\Xi}{\exists \alpha. \tau}}\xi w \eqdef
    \Exists{X:\mathcal{T}}
    \bbrk{\JdgTp{\Xi}{\tau}}\prn{\xi,X} w
  }
  \mrow{
    \bbrk{\JdgTp{\Xi}{\mu \alpha. \tau}} \xi
    \eqdef
    \mu\Sub{\mathcal{T}}\,\prn{\lambda X. \bbrk{\JdgTp{\Xi,\alpha}{\tau}}\prn{\xi,X}}
  }
  \mrow{
    \bbrk{\JdgTp{\Xi}{\TpRef{\tau}}}\xi \eqdef
    \TpRef\prn{\bbrk{\JdgTp{\Xi}{\tau}}\xi}
  }
  \mrow{
    \bbrk{\JdgTp{\Xi}{\TpT{\tau}}}\xi \eqdef
    \TpT\prn{\bbrk{\JdgTp{\Xi}{\tau}}\xi}
  }
}

Function and product types are interpreted using the cartesian closure of
$\brk{\WICat,\TpSet}$; note that functor categories with relatively large
domain are not typically cartesian closed, but in our case the impredicativity
of $\TpSet\in\TpType$ ensures cartesian closure.
Next we interpret the state operations as families of natural transformations
in $\brk{\WICat,\TpSet}$:

\iblock{
  \mhang{
    \bbrk{\JdgTm{\Xi}{\Gamma}{\TmGet{\tau}{l}:\TpT{\tau}}}\xi w\gamma w' h\eqdef
  }{
    \mrow{
      \Kwd{let}\ i \eqdef \bbrk{\JdgTm{\Xi}{\Gamma}{l:\TpRef{\tau}}}\xi w' \gamma\Sub{\vert w'};
    }
    \mrow{
      \vartheta\,\prn{
        \Next*\brk{z\leftarrow h i}.\,\eta\,\prn{\Con{pack}\,\prn{w',\prn{h,z}}}
      }
    }
  }

  \row

  \mhang{
    \bbrk{\JdgTm{\Xi}{\Gamma}{\TmSet{\tau}{l}{e}: \TpT{\TpUnit}}}\xi w \gamma w' h \eqdef
  }{
    \mrow{
      \Kwd{let}\ i \eqdef \bbrk{\JdgTm{\Xi}{\Gamma}{l:\TpRef{\tau}}}\xi w' \gamma\Sub{\vert w'};
    }
    \mrow{
      \Kwd{let}\ x \eqdef \bbrk{\JdgTm{\Xi}{\Gamma}{e:\tau}}\xi w' \gamma\Sub{\vert w'}
    }
    \mrow{
      \eta\,\prn{
        \Con{pack}\,\prn{
          w',
          \prn{
            h\brk{
              i \mapsto \Next {x}
            },
            \TmUnit
          }
        }
      }
    }
  }

  \row

  \mhang{
    \bbrk{\JdgTm{\Xi}{\Gamma}{\TmNew{\tau}{e}:\TpT\prn{\TpRef{\tau}}}}\xi w \gamma w' h \eqdef
  }{
    \mrow{
      \Kwd{let}\ i = \Con{fresh}\,\vrt{w'};
    }
    \mrow{
      \Kwd{let}\ w'' \eqdef w' \cup\brc{i\mapsto \Next\prn{\bbrk{\JdgTp{\Xi}{\tau}}}};
    }
    \mrow{
      \Kwd{let}\ x \eqdef \bbrk{\JdgTm{\Xi}{\Gamma}{e:\tau}}\xi w'' \gamma\Sub{\vert w''};
    }
    \mrow{
      \eta\,\prn{
        \Con{pack}\,\prn{
          w'',
          \prn{
            h\Sub{\vert{}w''}\cup\brc{l\mapsto\Next{x}},
            i
          }
        }
      }
    }
  }
}

In the interpretation of the $\TmNew{\tau}$ operator, we have assumed a
deterministic ``allocator'' $\Con{fresh} : \Ob{\WICat} \to \mathbb{N}$ that
chooses an unused address in a given world. Note that the naturality conditions
of the interpretation do not depend in any way on the behavior of
$\Con{fresh}$.

\subsubsection{Abstract steps in the equational theory of higher-order store}

Notice that our interpretation of the getter must invoke the $\Ltr$-algebra
structure on $\TpT{A}$ as the heap associates to each location $i$ an
element of $\vartheta\Sub{\mathcal{T}}\prn{w i}$. For this reason, equations
governing \emph{reads} to the heap do not hold on the nose but rather hold up
to an abstract ``step''.

For example, given $l : \TpRef{A}$ we might expect the equation
$\bbrk{\TmSet{A}{l}{a}; \TmGet{A}{l}} = \bbrk{\TmSet{A}{l}{a}; \TmRet{a}}$ to hold, but
the call to $\vartheta$ in the former denotation will prevent them from
agreeing in our model. This behavior is familiar already from the
work of Escard\'o~\cite{escardo:1999} and Paviotti~\etal~\cite{paviotti-mogelberg-birkedal:2015} on denotational semantics of
general recursion in guarded/metric domain theory: the guarded interpretation of recursion forces a more
intensional notion of equality that counts abstract steps. Thus in order to
properly formulate the equations that \emph{do} hold, we must account for these
abstract steps. To this end, we extend Monadic~System~\FMuRef{} by a new
primitive effect $\Alert{\TmStep:\TpT{\TpUnit}}$ that takes an abstract step:

\iblock{
  \mrow{
    \bbrk{\JdgTm{\Xi}{\Gamma}{\TmStep:\TpT{\TpUnit}}}\xi w \gamma w' h \eqdef
    \Alert{\delta}\,\prn{
      \eta\,\prn{
        \Con{pack}\,\prn{w',\prn{h,\TmUnit}}
      }
    }
  }
}

With the above in hand, we may state and prove the expected equations for state:
\begin{thm}
  Our denotational semantics of Monadic~System~\FMuRef validates the following equations:
  \begin{mathpar}
    \ebrule{
      \hypo{\JdgTm{\Xi}{\Gamma}{l:\TpRef{\tau}}}
      \hypo{\JdgTm{\Xi}{\Gamma}{u:\tau}}
      \infer2{
        \JdgTm{\Xi}{\Gamma}{
          \TmSet{\tau}{l}{u};
          \TmGet{\tau}{l}
          \equiv
          \Alert{\TmStep};
          \TmSet{\tau}{l}{u};
          \TmRet{u}
          :
          \TpT{\tau}
        }
      }
    }
    \and
    \ebrule{
      \hypo{\JdgTm{\Xi}{\Gamma}{l:\TpRef{\tau}}}
      \infer1{
        \JdgTm{\Xi}{\Gamma}{
          \prn{
            \TmBind{x}{\TmGet{\tau}{l}}
            \TmSet{\tau}{l}{x}
          }
          \equiv \Alert{\TmStep}
          : \TpT\TpUnit
        }
      }
    }
    \and
    \ebrule{
      \hypo{\JdgTm{\Xi}{\Gamma}{u,v:\tau}}
      \infer1{
        \JdgTm{\Xi}{\Gamma}{
          \prn{
            \TmBind{x}{\TmNew{\tau}{u}}
            \TmSet{\tau}{x}{v};
            \TmRet{x}
          }
          \equiv
          \TmNew{A}{v}
          : \TpT\prn{\TpRef{\tau}}
        }
      }
    }
    \and
    \ebrule{
      \hypo{\JdgTm{\Xi}{\Gamma}{l:\TpRef{\tau}}}
      \hypo{\JdgTm{\Xi}{\Gamma}{u,v:\tau}}
      \infer2{
        \JdgTm{\Xi}{\Gamma}{
          \TmSet{\tau}{l}{u};\TmSet{\tau}{l}{v}
          \equiv
          \TmSet{\tau}{l}{v}
          :\TpT\TpUnit
        }
      }
    }
  \end{mathpar}
\end{thm}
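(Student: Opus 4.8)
The plan is to verify each of the four equations pointwise in the functor category $\brk{\WICat,\TpSet}$. After unfolding the monad structure of $\TpT*$ and the interpretations of the getter $\TmGet{\tau}{l}$, the setter $\TmSet{\tau}{l}{u}$, the allocator $\TmNew{\tau}{u}$, the step $\TmStep$, and $\TmRet{u}$, both sides of each equation become elements of $\prn{\TpT{\tau}}w$, that is, functions that accept a future world $w'\geq w$ and a heap $h\in\HICat\Sub{w'}$ and return an element of $\TpL\Exists{w''\geq w'}\HICat\Sub{w''}\times\bbrk{\tau}w''$. First I would record the derived monad operations: the unit of $\TpT*$ sends a value $a$ to $\lambda w'\,h.\,\eta\prn{\Con{pack}\prn{w',\prn{h,a\Sub{\vert w'}}}}$, while the bind of a computation with a continuation is computed by the Kleisli extension of the lift monad $\TpL*$ after unpacking the existential. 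I would also note the identity $\TmStep=\delta\prn{\eta\prn{\Con{pack}\prn{w',\prn{h,\TmUnit}}}}$, so that an abstract step is literally one application of the delay map $\delta=\vartheta\circ\Next*$ to a returned value.

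The calculation turns on a small collection of equational laws. From the finite-map structure on worlds and heaps I use the lookup-after-write law $\prn{h\brk{i\mapsto a}}i=a$, the overwrite law $h\brk{i\mapsto a}\brk{i\mapsto b}=h\brk{i\mapsto b}$, and the write-back law $h\brk{i\mapsto h\,i}=h$. From the guarded fragment I use the delayed-substitution laws of Bizjak~\etal---weakening, the reconstitution law $\Next*\brk{z\leftarrow a}.\Next z=a$, and the computation law $\Next*\brk{z\leftarrow\Next u}.t=\Next\prn{t\Sub{u/z}}$---together with the guarded-domain law $\vartheta\prn{\Next a}=\delta\,a$, the unit laws of $\TpL*$, and the compatibility of lift-bind with $\delta$. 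Finally, since the interpretation of a reference $l:\TpRef{\tau}$ is a natural element of $\vrt{-}$, the underlying location index $i$ is preserved under the restrictions $\gamma\Sub{\vert w'}$; this lets me identify ``the same $l$'' across the successively larger worlds produced by the operations, so that the reads and writes in each equation refer to a common location $i$.

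With these in hand the four equations split into two groups. The overwrite law $\TmSet{\tau}{l}{u};\TmSet{\tau}{l}{v}\equiv\TmSet{\tau}{l}{v}$ and the allocate-then-set law are step-free: neither $\TmSet{\tau}{l}{u}$ nor $\TmNew{\tau}{u}$ invokes $\vartheta$, so both sides reduce using only the finite-map overwrite law (and, for allocation, the coincidence of the freshly chosen index on the two sides). The remaining two equations are exactly where the abstract step is forced. For $\TmSet{\tau}{l}{u};\TmGet{\tau}{l}$, after the write the read location holds the known value $\Next u$; by $\prn{h\brk{i\mapsto\Next u}}i=\Next u$ and the computation law the getter's delayed substitution collapses and the surviving $\vartheta\prn{\Next\prn{-}}$ becomes a single $\delta$, matching $\TmStep;\TmSet{\tau}{l}{u};\TmRet{u}$ on the nose once the leading $\delta$ of $\TmStep$ is commuted past the continuation. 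For $\TmBind{x}{\TmGet{\tau}{l}} \TmSet{\tau}{l}{x}$, the getter binds $z\leftarrow h\,i$ and the setter writes $h\brk{i\mapsto\Next z}$; pushing the write under the binding and applying $\Next*\brk{z\leftarrow h\,i}.\Next z=h\,i$ together with $h\brk{i\mapsto h\,i}=h$ restores the heap to $h$, after which weakening and $\vartheta\prn{\Next a}=\delta\,a$ leave precisely one $\delta$, which is $\TmStep$.

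I expect the main obstacle to be the delayed-substitution bookkeeping in the read-then-write-back equation: one must show that the value written back is literally the value read, so that the heap update performed under the binding $z\leftarrow h\,i$ collapses to the identity via $h\brk{i\mapsto h\,i}=h$, and that the single remaining application of $\vartheta$ contributes exactly one step rather than two. This requires care in commuting the heap-update and the $\eta\prn{\Con{pack}\prn{-}}$ former past the delayed substitution and in tracking which occurrences of $\Next$ are cancelled by which bindings; the other three equations, by contrast, are essentially direct once the finite-map laws and the naturality of the reference interpretation are in place.
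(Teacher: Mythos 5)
Your proposal is correct and takes the same route as the paper, whose proof is simply ``immediate upon unfolding the denotational semantics of the given equations''; you have spelled out exactly the unfolding-and-computation the paper leaves implicit, including the correct identification of which two equations force an abstract step (those whose getter invokes $\vartheta$) and which finite-map and delayed-substitution laws discharge each case.
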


\begin{proof}
  Immediate upon unfolding the denotational semantics of the given equations.
\end{proof}

\subsection{Formal case study: presheaf model of higher-order store in Coq}\label{sec:coq}

Although the semantics of \iGDTT{} are admittedly quite sophisticated (see
\cref{sec:semantics}), one of its advantages is that it is quite
straightforward to extend existing proof assistants such as Coq~\cite{coq:reference-manual} and Agda~\cite{norell:2009} with
the axioms of \iGDTT{}. Thus a practitioner of programming language semantics
can define ``na\"ive'' synthetic models of effects such as higher-order store
in a proof assistant without needing to know the category theory that was
required to invent and justify \iGDTT{}.

To validate the use of \iGDTT{} for formal semantics of programming languages,
we have postulated the axioms of \iGDTT{} in a Coq library and used it to
formalize the construction of the indexed state monad $\TpT*$ defined in
\cref{sec:monad} as well as the reference type connective. In our
formalization, we decompose $\Con{T}$ as the horizontal composition of several
adjunctions and thus obtain the monad laws for free by taking $\Con{T}$ to be
the monad of the composite adjunction.

\subsubsection{Axiomatizing the impredicative universe}

We take Coq's \cd{Set} universe to be the impredicative universe $\TpSet$ of \iGDTT{};
it is possible to execute Coq with the flag \verb|-impredicative-set|, but we
have found it simpler to implement the impredicativity by means of a local pragma
as follows:
\begin{code}
  \#[bypass\_check(universes = yes)]
  \CoqKwd{Definition} All \{A : Type\} \{B : A -> Set\} : Set :=
    forall x : A, B x.
\end{code}

For \iGDTT's predicative universes $\TpType_i$ we use Coq's predicative
universes \cd{Type@\{i\}}. We formalize the left adjoint to the inclusion
$\cd{Set}\hookrightarrow \cd{Type@\{i\}}$, verifying its universal property
using the argument of Awodey, Frey, and Speight~\cite{awodey-frey-speight:2018}.

\subsubsection{Axiomatizing guarded recursion}

Next we axiomatize the later modality in Coq. Coq's implicit universe
polymorphism ensures that the following axioms land not only in \cd{Type}
but in \cd{Set} as well:
\begin{code}
  \CoqKwd{Axiom} later : Type -> Type.
  \CoqKwd{Notation} "$\Ltr$ A" := (later A) (at level 60).
  \CoqKwd{Axiom} next : forall {A}, A -> $\Ltr$ A.
  \CoqKwd{Axiom} loeb : forall {A} (f : $\Ltr$ A -> A), A.
  \CoqKwd{Axiom} loeb\_unfold : forall {A} (f : $\Ltr$ A -> A), loeb f = f (next (loeb f)).
\end{code}

We do not include the general delayed substitutions in our axiomatization,
because they are too difficult to formalize; for our case study, the following
special case of delayed substitutions suffices:
\begin{code}
  \CoqKwd{Axiom} dlater : $\Ltr$ Type -> Type.
  \CoqKwd{Axiom} dlater\_next\_eq : forall A, $\Ltr$ A = dlater (next A).
\end{code}

\subsubsection{Formalizing guarded interaction trees}

Building on the above and several other axioms of guarded recursion that we
omit for lack of room, we formalize a notion of guarded algebraic effect based
on \emph{containers}~\cite{abbott-altenkirch-ghani:2005}. Given a container
$\mathbb{E} = \prn{S : \TpSet, P : S \to \TpSet}$, we may view each $s:S$ as
the name of an effect operation for which $Ps$ is the continuation arity.
Writing $\mathcal{P}\Sub{\mathbb{E}}X = \Sum{s:S}\Prod{p:Ps}X$ for the
polynomial endofunctor presented by $\mathbb{E}$, we may consider for each $A$
the free $\mathcal{P}\Sub{\mathbb{E}}$ algebra generated by $A$ and the
(co)-free $\mathcal{P}\Sub{\mathbb{E}}$ coalgebra generated by $A$; the former
is the inductive type of finite (terminating) computations of a value of type
$A$ that use the effects specified by $\mathbb{E}$, whereas the latter is the
\emph{coinductive} type of possibly infinite computations. The latter are
referred to as \emph{interaction trees} by Xia~\etal~\cite{itrees:popl:2019},
who have propelled a resurgence in their use for reasoning about code in
first-order languages via denotational semantics.

We formalize a \emph{guarded} version of interaction trees, obtained by solving
the guarded domain equation $X = A + \mathcal{P}\Sub{\mathbb{E}}\Ltr{X}$ using \cd{loeb} in \cd{Set}.

\begin{code}
  \CoqKwd{Record} Container := \{op : Set; bdry : op -> Set\}.

  \CoqKwd{Inductive} ITree\_F ($\mathbb{E}$ : Container) A T : Set) : Set :=
  | Ret (r : A)
  | Do (e : $\mathbb{E}$) (k : bdry e -> T).

  \CoqKwd{Definition} ITree ($\mathbb{E}$ : Container) (A : Set) : Set :=
    loeb (fun T => ITree\_F (dlater T)).
\end{code}

We develop a free-forgetful adjunction for each container $\mathbb{E}$; in
particular, we prove that \cd{ITree $\mathbb{E}$ A} is the free
$\prn{\mathcal{P}\Sub{\mathbb{E}}\circ\Ltr}$-algebra generated by $A$. The
reason for developing guarded interaction trees is that there is a particular
container $\mathbb{E}$ for which \cd{ITree $\mathbb{E}$} is the \emph{guarded
lift monad} $\TpL*$ of Paviotti, M\o{}gelberg, and
Birkedal~\cite{paviotti-mogelberg-birkedal:2015}, which we have discussed in
\cref{sec:guarded-domains}. Nonetheless, the extra generality allows us to
modularly add other effects such as failure or printing, \etc.

\subsubsection{Recursively defined worlds, presheaves, and the state monad}

We start by defining the category of finite maps of elements of a given type:
\begin{code}
  \CoqKwd{Definition} world : Type -> Category.type := ...
\end{code}

Then we define the collections of semantic worlds and semantic types by solving a
guarded domain equation, where we write \cd{SET} for the category induced by
the impredicative \cd{Set} universe.
\begin{code}
  \CoqKwd{Definition} $\mathcal{F}$ (T : Type) : Type := [world T, SET].
  \CoqKwd{Definition} $\mathcal{T}$ : Type := loeb (fun T => $\mathcal{F}$ (dlater T)).
  \CoqKwd{Definition} $\mathbb{W}$ : Category.type := world $\mathcal{T}$.
  \CoqKwd{Definition} heap (w : $\mathbb{W}$) : Type := ...
  \CoqKwd{Definition} $\mathbb{H}$ : Type := \{w : $\mathbb{W}$ & heap w\}.
\end{code}

We have also formalized the reference type connective as a map \cd{ref :
$\mathcal{T}$ -> $\mathcal{T}$}, exactly as in the informal presentation of
\cref{sec:monad}.  Our formalization of the state monad differs from our
informal presentation in two ways: firstly we have generalized over an
arbitrary container $\mathbb{E}$ specifying computational effects (\eg
printing, failure, \etc), and secondly we found it simpler to decompose it into
the following sequence of simpler adjunctions, whose definition involves the
impredicative \cd{All} quantifier that we axiomatized earlier:
\[
  \begin{tikzpicture}[diagram]
    \node (0) {$\brk{\OpCat{\WICat},\cd{Alg $\mathbb{E}$}}$};
    \node (1) [right = 2.8cm of 0] {$\brk{\OpCat{\WICat},\cd{Set}}$};
    \node (2) [right = 2.5cm of 1] {$\brk{\HICat,\cd{Set}}$};
    \node (3) [right = 2.5cm of 2] {$\brk{\WICat,\cd{Set}}$};

    \draw[->,bend right=30] (0) to node (01-s) [below] {$\brk{\OpCat{\WICat},\cd{Forget}}$} (1);
    \draw[->,bend right=30] (1) to node (01-n) [above] {$\brk{\OpCat{\WICat},\cd{ITree $\mathbb{E}$}}$} (0);
    \node [between=01-s and 01-n] {$\bot$};

    \draw[->,bend right=30] (1) to node (12-s) [below] {$\Reix{\cd{heap}}$} (2);
    \draw[->,bend right=30] (2) to node (12-n) [above] {$\Exists{\cd{heap}}$} (1);
    \node [between=12-s and 12-n] {$\bot$};

    \draw[->,bend right=30] (2) to node (23-s) [below] {$\Forall{\cd{heap}}$} (3);
    \draw[->,bend right=30] (3) to node (23-n) [above] {$\Reix{\cd{heap}}$} (2);
    \node [between=23-s and 23-n] {$\bot$};

  \end{tikzpicture}
\]

We did not formalize the state operations,
but there is no obstacle to doing so.
\section{Impredicative guarded dependent type theory with reference types}\label{sec:dependent}

In this section, we detail an extension \Alert{\iGDTTRef{}} of \iGDTT{} with
general reference types and a state monad. \iGDTTRef{} can serve as both an
effectful higher-order programming language in its own right, \emph{and} as a
dependently typed metalanguage for the denotational semantics of other
programming languages involving higher-order store. Later on we will extend
\iGDTTRef{} with relational constructs to enable the \emph{verification} of
stateful programs inside the type theory. Denotational semantics for these
extensions of \iGDTT{} are discussed in \cref{sec:semantics}.

\subsection{Adding reference types to \texorpdfstring{\iGDTT{}}{iGDTT}}\label{sec:dependent:axiomatization}

In this section, we extend the \iGDTT{} language with the following constructs:
\begin{enumerate}
  \item a monad $\TpT* : \TpSet\to\TpSet$,
  \item a later algebra structure $\vartheta\Sub{\TpT{A}}:\Ltr{\TpT{A}}\to\TpT{A}$ parameterized in $A:\TpSet$ such that:
    \begin{mathpar}
      \ebrule{
        \hypo{u:\TpT{A}}
        \hypo{x:A\vdash vx:\TpT{B}}
        \infer2{
          \delta\Sub{\TpT{B}}\prn{\TmBind{x}{u} vx}
          =
          \prn{\TmBind{x}{\delta\Sub{\TpT{A}}u} vx}
          =
          \prn{\TmBind{x}{u} \delta\Sub{\TpT{B}}\prn{vx}}
          :
          \TpT{B}
        }
      }
    \end{mathpar}

    We will write $\TmStep : \TpT{\TpUnit}$ for the generic effect $\delta\Sub{\TpT{\TpUnit}}\prn{\TmRet{\TmUnit}}$; the rule above ensures that $\TmStep$ commutes with all operations in the monad.

  \item a reference type connective closed under the following rules:
    \begin{mathpar}
      \ebrule{
        \hypo{A:\TpSet}
        \infer1{\TpRef{A}:\TpSet}
      }
      \and
      \ebrule{
        \hypo{l:\TpRef{A}}
        \infer1{
          \TmGet{A}{l}
          : \TpT{A}
        }
      }
      \and
      \ebrule{
        \hypo{l:\TpRef{A}}
        \hypo{u:A}
        \infer2{
          \TmSet{A}{l}{u}:\TpT{\TpUnit}
        }
      }
      \and
      \ebrule{
        \hypo{u:A}
        \infer1{
          \TmNew{A}{u} : \TpT\prn{\TpRef{A}}
        }
      }
      \and
      \StateEquations
    \end{mathpar}
\end{enumerate}

\subsection{Programming with higher-order store and dependent types}

\subsubsection{Type dependency in the heap}

Because the universe $\TpSet$ is closed under dependent types (including
equality types, \etc), there is no obstacle to storing elements of dependent
types in the heap. For instance, it is easy to allocate a reference that
can only hold even integers; as an example, we consider a program that increments an
even integer in place, abbreviating $T\eqdef \Compr{z:\TpInt}{\Con{isEven}\,z}$:

\iblock{
  \mrow{
    M : \TpRef{T}\to \TpT{\TpUnit}
  }
  \mrow{
    M \eqdef
    \lambda l.
    \TmBind{x}{\TmGet{T}{l}}
    \TmSet{T}{l}{\prn{x+2}}
  }
}

\NewDocumentCommand\TpNat{}{\mathbb{N}}

\subsubsection{Recursion via back-patching}

To illustrate higher-order store, we can give a more sophisticated example
involving \emph{backpatching} via Landin's knot, writing $\bot :
\TpL{\alpha}$ for the divergent element $\mu x. x$.

\iblock{
  \mrow{
    \Con{patch} : \Forall{\alpha:\TpSet} \prn{\prn{\alpha\to \TpT{\alpha}} \to \prn{\alpha\to\TpT{\alpha}}} \to \TpT\prn{\TpRef\prn{\alpha\to \TpT{\alpha}}}
  }
  \mrow{
    \Con{patch}\,\alpha\,F \eqdef
    \TmBind{r}{\TmNew{\alpha\to \TpT{\alpha}}{\prn{\lambda \_. \bot}}}
    \TmSet{\alpha\to\TpT{\alpha}}{r}{
      \prn{F\,\prn{\lambda x. \TmBind{f}{\TmGet{\alpha\to\TpT{\alpha}}{r}} fx}}
    };
    \TmRet{r}
  }

  \row

  \mrow{
    \Con{knot} : \Forall{\alpha:\TpSet} \prn{\prn{\alpha\to \TpT{\alpha}} \to \prn{\alpha\to\TpT{\alpha}}} \to {\alpha\to \TpT{\alpha}}
  }

  \mrow{
    \Con{knot}\,\alpha\,F\,x \eqdef
    \TmBind{r}{\Con{patch}\,\alpha\,F}
    F\,\prn{\lambda z. \TmBind{f}{\TmGet{\alpha\to\TpT{\alpha}}{r}} fz}\, x
  }
}

We can use the backpatching knot to give a (contrived) computation of the factorial:

\iblock{
  \mrow{
    \Con{fact}' : \prn{\TpNat\to\TpT{\TpNat}} \to \TpNat\to \TpT{\TpNat}
  }
  \mrow{
    \Con{fact}'\,f\,0 \eqdef \TmRet{1}
  }
  \mrow{
    \Con{fact}'\,f\,\prn{n+1} \eqdef
    \TmBind{m}{f\, n}
    \TmRet\prn{\prn{n+1} \times m}
  }
  \row
  \mrow{
    \Con{fact} : \TpNat\to\TpT{\TpNat}
  }
  \mrow{
    \Con{fact} \eqdef \Con{knot}\,\TpNat\,\Con{fact}'
  }
}

\begin{lem}
  Entirely inside of \iGDTTRef{} we may prove the
  following correctness lemma given a reference implementation $\Con{goodFact} :
  \TpNat\to\TpNat$:
  \[
    \Forall{n:\TpNat}
    \Con{fact}\,n = \prn{
      \TmBind{\_}{\Con{patch}\,\TpNat\,\Con{fact}'} \TmStep^n; \TmRet\prn{\Con{goodFact}\,n}
    }
  \]
\end{lem}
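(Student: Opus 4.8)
The plan is to collapse the whole statement to a single equational invariant about the patched reference and to establish that invariant by induction on $n$. Write $B \eqdef \TpNat\to\TpT{\TpNat}$, let $g_r \eqdef \lambda z.\,\TmBind{f}{\TmGet{B}{r}} f\,z$ be the looked-up function, and set $V \eqdef \Con{fact}'\,g_r$, the value that $\Con{patch}$ installs at $r$. First I would unfold $\Con{fact}\,n = \Con{knot}\,\TpNat\,\Con{fact}'\,n$ and $\Con{patch}\,\TpNat\,\Con{fact}'$, and use the monad laws to absorb the trailing $\TmRet{r}$, bringing the left-hand side to the form $\TmBind{r}{\TmNew{B}{\prn{\lambda \_. \bot}}}\prn{\TmSet{B}{r}{V};\,\Con{fact}'\,g_r\,n}$. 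Applying the same bookkeeping to the right-hand side (whose reference is discarded) reduces the lemma to commuting the abstract steps past one residual write, together with the invariant
\[
  (\star_n)\qquad
  \TmSet{B}{r}{V};\,\Con{fact}'\,g_r\,n
  =
  \TmStep^n;\,\TmSet{B}{r}{V};\,\TmRet{\prn{\Con{goodFact}\,n}}.
\]

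The crucial design choice is to keep the write $\TmSet{B}{r}{V}$ explicit inside $(\star_n)$ rather than executing $\Con{patch}$ outright. I would prove $(\star_n)$ by induction on $n$. The base case is immediate from $\Con{fact}'\,g_r\,0 = \TmRet{1}$ and $\Con{goodFact}\,0 = 1$, with no step incurred. For the successor case I would unfold $\Con{fact}'\,g_r\,\prn{n+1} = \TmBind{m}{g_r\,n}\TmRet{\prn{\prn{n+1}\times m}}$, reassociate so that the read $\TmGet{B}{r}$ inside $g_r\,n$ sits immediately after the write, and fire the set--get equation; since that equation reproduces the write, the residual $\TmSet{B}{r}{V}$ is exactly the invariant's write, and its right-hand value $V = \Con{fact}'\,g_r$ rewrites the looked-up application to $\Con{fact}'\,g_r\,n$. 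Appealing to the induction hypothesis, commuting the accumulated steps to the front via the $\delta$-commutation axiom, and using $\Con{goodFact}\,\prn{n+1} = \prn{n+1}\times\Con{goodFact}\,n$ then yields $(\star_n)$ at $n+1$.

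The hard part is not any single computation but the orchestration that keeps set--get applicable at every level: because each recursive call reads $r$ afresh, a na\"ive argument would need a read-commutes-with-read law that is \emph{not} among our state equations. Retaining the write in $(\star_n)$ side-steps this, since set--get simultaneously consumes a read and restores the write, so the invariant is genuinely preserved and exactly one abstract step is paid per recursion level---accounting for the $\TmStep^n$ on the right. The remaining care is purely step bookkeeping: repeatedly relocating $\TmStep$ across $\TmBind$ and across the final write using the $\delta$-commutation axiom of \iGDTTRef{}, and observing that the returned value no longer mentions $r$, so the residual reference may be discarded to match the right-hand side.
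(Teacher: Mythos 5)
Your proof is correct and follows essentially the same route as the paper's: induction on $n$, with exactly one application of the set--get equation per recursion level and the $\delta$-commutation law to float the resulting $\TmStep$ into place. The only substantive difference is the shape of the induction hypothesis: you strengthen it to the invariant $(\star_n)$, which keeps the write $\TmSet{B}{r}{V}$ explicit and is stated for an arbitrary reference $r$, whereas the paper inducts directly on the statement of the lemma itself. Your refactoring is sound---$(\star_n)$ does imply the lemma once $\TmStep^n$ is commuted past the write, and the inductive step goes through as you describe. However, your stated motivation for the strengthening is off: the direct induction does \emph{not} require any read-commutes-with-read law. After set--get fires at level $n+1$, the write is restored, and the residual term is---by associativity of bind and the definitions of $\Con{knot}$ and $\Con{patch}$---literally $\TmBind{m}{\Con{fact}\,n}\ \TmStep;\TmRet\prn{\prn{n+1}\times m}$; the outer allocation and write are re-absorbed into $\Con{fact}\,n$, so the unstrengthened induction hypothesis applies directly. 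Both arguments pay one abstract step per level for the same reason, namely that set--get simultaneously consumes the read and restores the write; your invariant merely makes that mechanism explicit rather than leaving it folded inside the definition of $\Con{fact}$, which is a legitimate (and arguably clearer) presentational choice but not a logical necessity.
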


\begin{proof}
  We proceed by induction on $n$; the base case is immediate:
  \begin{align*}
    \Con{fact}\,0 &=
    {r \leftarrow \Con{patch}\, \TpNat\,\Con{fact}'; \Con{fact}'\,\prn{\ldots}\, 0}
    \\
    &= {\_ \leftarrow \Con{patch}\, \TpNat\,\Con{fact}'; \TmRet{1}}
    \\
    &= {\_ \leftarrow \Con{patch}\, \TpNat\,\Con{fact}'; \TmRet\prn{\Con{goodFact}\,0}}
  \end{align*}

  Next we fix $n:\mathbb{N}$ such that $\Con{fact}\,n = \prn{\TmBind{\_}{\Con{patch}\,\TpNat\,\Con{fact}'} \TmStep^n; \TmRet\prn{\Con{goodFact}\,n}}$:
  \begin{align*}
    &\Con{fact}\,\prn{n+1}
    \\
    &\quad=
    {
      \TmBind{r}{\Con{patch}\, \TpNat\,\Con{fact}'}
      \Con{fact}'\,\prn{\lambda x. \TmBind{f}{\TmGet{\TpNat\to\TpT{\TpNat}}{r}} fx}\, \prn{n+1}
    }
    \\
    &\quad=
    {
      \TmBind{r}{\Con{patch}\, \TpNat\,\Con{fact}'}
      \TmBind{m}{
        \prn{
          \TmBind{f}{\TmGet{\TpNat\to\TpT{\TpNat}}{r}} {fn}
        }
      }
      \TmRet\prn{\prn{n+1} \times m}
    }
    \\
    &\quad=
    {
      \TmBind{r}{\Con{patch}\, \TpNat\,\Con{fact}'}
      \TmBind{f}{\TmGet{\TpNat\to\TpT{\TpNat}}{r}}
      \TmBind{m}{fn}
      \TmRet\prn{\prn{n+1} \times m}
    }
    \\
    &\quad=
    {
      \TmBind{r}{\Con{patch}\, \TpNat\,\Con{fact}'}
      \TmStep;
      \TmBind{m}{\Con{fact}'\,\prn{\lambda x. \TmBind{f}{\TmGet{\TpNat\to\TpT{\TpNat}}{r}}{fx}}\,n}
      \TmRet\prn{\prn{n+1} \times m}
    }
    \\
    &\quad=
    {
      \TmBind{r}{\Con{patch}\, \TpNat\,\Con{fact}'}
      \TmBind{m}{\Con{fact}'\,\prn{\lambda x. \TmBind{f}{\TmGet{\TpNat\to\TpT{\TpNat}}{r}}{fx}}\,n}
      \TmStep;
      \TmRet\prn{\prn{n+1} \times m}
    }
    \\
    &\quad=
    m\leftarrow \Con{fact}\,n;
    \TmStep;
    \TmRet\prn{\prn{n+1}\times m}
    &&\text{by def.}
    \\
    &\quad=
    \TmBind{\_}{\Con{patch}\,\TpNat\,\Con{fact}'} \TmStep^n;
    \TmStep;
    \TmRet\prn{\prn{\Con{goodFact}\,n + 1} \times n}
    &&
    \text{by i.h.}
    \\
    &\quad=
    \TmBind{\_}{\Con{patch}\,\TpNat\,\Con{fact}'} \TmStep\Sup{n+1};
    \TmRet\prn{\Con{goodFact}\,\prn{n + 1}}
    &&\qedhere
  \end{align*}
\end{proof}

\begin{rem}
  Note that we do not have
  $\Con{fact}\,n=\TmRet\prn{\Con{goodFact}\,n}$; to understand why this is the
  case, there are two things to take note of. First, the abstract steps
  taken by each access to the heap are explicitly tracked by the equational
  theory of \iGDTTRef{} as an instance of the $\TmStep$ effect. Secondly, even
  though the allocations carried out by $\Con{patch}$ are no longer active upon
  return, the equational theory of \iGDTTRef{} cannot ``garbage collect'' them.
\end{rem}

\section{Logical relations as types for recursion and higher-order store}\label{sec:igdtt-lrat}

Like any dependent type theory, \iGDTTRef{} is equipped with a ``built-in''
notion of equality, but it is too fine to serve as a basis for reasoning about
the equivalence of effectful programs. For instance, we have already seen that
even the most basic programs governing reading and writing to a reference may
differ in the number of steps they take. Accordingly, we will construct a coarser
notion of equivalence between programs in the form of a logical relation over
\iGDTTRef{}.

In particular, we discuss an extension of the axioms of \iGDTTRef{} which
supports proof-relevant binary relational reasoning in the sense of
Sterling and Harpers's \DefEmph{logical relations as types} (LRAT)
principle~\cite{sterling-harper:2021}.\footnote{The LRAT language is also referred to in other contexts as
\emph{synthetic Tait computability}~\cite{sterling:2021:thesis}.} The
result is \iGDTTRefLRAT{}, a new dependent type theory with lightweight
features for verifying the equivalence of effectful computations. In addition
to capturing the traditional ``method of candidates'' reasoning,
\iGDTTRefLRAT{} includes a notion of weak
bisimulation~\cite{mogelberg-paviotti:2016,paviotti:2016}. The inclusion of
weak bisimulation is crucial; without it, only programs that read from the heap
in lockstep could ever be related.

Finally, we show how our synthetic approach simplifies two standard examples
from Birkedal, St\o{}vring, and Thamsborg~\cite{birkedal-stovring-thamsborg:2010} concerning equivalent
implementations of imperative counters.

\subsection{Logical relations as types: a synthetic approach to relational reasoning}

The \DefEmph{logical relations as types} principle states that logical
relations can be treated \emph{synthetically} as ordinary types in the presence
of certain modal operators that exist in categories of logical relations. Thus
LRAT is an abstraction of logical relations that avoids the bureaucracy of
analytic approaches; the ``work'' that goes into conventional logical relations
arguments has not disappeared, but has been refactored into general results of
category theory that are much simpler to establish directly than almost any of
their practical consequences in programming languages.

We execute the LRAT extension \Alert{\iGDTTRefLRAT} of \iGDTTRef{} by
postulating three propositions $\OpnL,\OpnR,\Opn:\TpProp$ representing the
left- and right-hand sides of a correspondence, and the disjoint union of both
sides; the idea is that these propositions will generate modal operators that
allow us to think of \emph{any} type as a correspondence, and then project out
different aspects of the correspondence.

\begin{mathpar}
  \ebrule{
    \infer0{\OpnL,\OpnR,\Opn : \TpProp}
  }
  \and
  \ebrule{
    \hypo{\_:\OpnL}
    \hypo{\_:\OpnR}
    \infer2{\_:\bot}
  }
\end{mathpar}

Above we have imposed a rule to make $\OpnL,\OpnR$ disjoint. We additionally
add rules to make $\Opn$ behave as the \emph{coproduct} $\OpnL+\OpnR$, with a
splitting rule not only for propositions but also for types.

\begin{mathpar}
  \ebrule{
    \hypo{\_:\OpnL\vdash a_L:A}
    \hypo{\_:\OpnR\vdash a_R:A}
    \infer2{
      \_:\Opn\vdash \brk{\OpnL\hookrightarrow a_L,\OpnR\hookrightarrow a_R} : A
    }
  }
  \and
  \ebrule{
    \hypo{\_:\OpnL\vdash a_L:A}
    \hypo{\_:\OpnR\vdash a_R:A}
    \infer2{
      \_:\OpnL\vdash \brk{\OpnL\hookrightarrow a_L,\OpnR\hookrightarrow a_R} = a_L : A
    }
  }
  \and
  \ebrule{
    \hypo{\_:\OpnL\vdash a = b:A}
    \hypo{\_:\OpnR\vdash a = b :A}
    \infer2{
      \_:\Opn\vdash a = b : A
    }
  }
\end{mathpar}

\begin{rem}
  In other applications of LRAT such as those of
  Sterling~\etal~\cite{sterling-angiuli:2021,sterling-harper:2022}, case splits
  were included for arbitrary $\phi\lor\psi$. This is possible in the internal language of a topos, but
  \emph{not} in the \iGDTT{} where the universe of propositions is meant to lie
  within $\TpType_i$, which will be interpreted by a quasitopos rather than a
  true topos.  Roughly the issue is that in a quasitopos $\ECat$, the union of
  two regular subobjects need not be regular unless $\ECat$ is
  quasiadhesive~\cite{johnstone-lack-sobocinski:2007}.
\end{rem}

The modalities that we consider in this paper are all \emph{left exact,
idempotent, monadic} modalities in the sense of
Rijke, Shulman, and Spitters~\cite{rijke-shulman-spitters:2020}. Left exactness means that they preserve
finite limits, and idempotence means that the multiplication map $T^2\to T$ is
a natural isomorphism. In this section, let $T$ be an arbitrary left exact
idempotent monad.

\begin{defi}[Modal types]
  We will refer to a type $A$ as
  \DefEmph{$T$-modal} when the unit map $\Mor[\eta_A]{A}{TA}$ is an isomorphism.
\end{defi}

\begin{nota}[Modal elimination]
  Idempotent monadic modalities enjoy a universal property: for any $T$-modal
  type $C$, the function $\Mor{\prn{T A \to C}}{\prn{A\to C}}$ induced by
  precomposition with the unit is an isomorphism.  We reflect this in our
  notation as follows:
  \begin{mathpar}
    \ebrule{
      \hypo{C\text{ is $T$-modal}}
      \hypo{u : TA}
      \hypo{x : A \vdash fx : C}
      \infer3{
        \prn{\TmBind{x}{u} fx} : C
      }
    }
    \and
    \ebrule{
      \hypo{\ldots}
      \hypo{a:A}
      \hypo{x:A\vdash fx:C}
      \infer3{
        \prn{\TmBind{x}{\eta a} fx} = fa: C
      }
    }
    \and
    \ebrule{
      \hypo{C\text{ is $T$-modal}}
      \hypo{x:TA\vdash fx, gx : C}
      \hypo{x:A\vdash f\,\prn{\eta x} = g\,\prn{\eta x} : C}
      \hypo{u:TA}
      \infer4{
        fu = gu :C
      }
    }
  \end{mathpar}
\end{nota}

Let $\phi:\TpProp$ be a proposition.

\begin{defi}
  A type $A$ is called \DefEmph{$\phi$-transparent} when the constant map
  $\Mor{A}{\prn{\phi\to A}}$ is an isomorphism; on the other hand, $A$ is
  called \DefEmph{$\phi$-sealed} when the projection map $\Mor{\phi\times
  A}{\phi}$ is an isomorphism. Equivalently, $A$ is $\phi$-sealed exactly when
  $\phi$ implies that $A$ is a singleton; we will write $\star : A$ for the
  unique element of any $\top$-sealed type $A$.
\end{defi}

Intuitively, a $\phi$-transparent type is one that ``thinks'' $\phi$ is true;
conversely, a $\phi$-sealed type is one that ``thinks'' $\phi$ is false and
thus contracts to a point under $\phi$.
The $\phi$-transparent and $\phi$-sealed types are each governed by modalities,
referred to as \DefEmph{open} and \DefEmph{closed} respectively.

\begin{defi}[Open modality]
  We will write $\OpMod{\phi}{A}$ for the \emph{implicit} function space
  $\phi\to A$, which we refer to as the \DefEmph{open modality} for $\phi$; we
  will leave both the $\lambda$-abstraction and application implicit in our
  notation. Likewise, given an element $A:\OpMod{\phi}{\TpSet}$ we will write
  $\OpMod{\phi}{A}$ for the \emph{implicit} dependent product of $A$; this is
  the \DefEmph{dependent open modality}.\footnote{The implicitness is only a
  matter of convenient notation; it plays no mathematical role.}
\end{defi}

\begin{defi}[Closed modality]\label{def:closed-modality}
  We will write $\ClMod{\phi}{A}$ for the following quotient inductive type,
  which we shall call the \DefEmph{closed modality} associated to $\phi$:

  \iblock{
    \mhang{
      \Kwd{quotient}\ \Kwd{data}\ \ClMod{\phi}{A} : \TpSet \ \Kwd{where}
    }{
      \mrow{
        \eta\Sub{\ClMod{\phi}} : A \to \ClMod{\phi}{A}
      }
      \mrow{
        \star : \brc{\_:\phi}\to {\ClMod{\phi}{A}}
      }
      \mrow{
        \_ : \brc{\_:\phi,u:\ClMod{\phi}{A}} \to u = \star
      }
    }
  }

  (We use curly braces to indicate \emph{implicit} arguments.)
  Put another way, $\ClMod{\phi}{A}$ is the quotient of the coproduct $\phi+A$
  under the $\TpProp$-valued equivalence relation $u\sim v \Longleftrightarrow \phi\lor\prn{u=v}$.
  The elimination rule for $\ClMod{\phi}{A}$ is thus a case statement with a side condition:
  \begin{mathpar}
    \ebrule{
      \hypo{x:\ClMod{\phi}{A}\vdash Cx : \TpSet}
      \hypo{x:A\vdash fx : C\,\prn{\eta x}}
      \hypo{\_:\phi\vdash g : C\star}
      \hypo{\Alert{x:A,\_:\phi\vdash fx = g : C\star}}
      \hypo{u : \ClMod{\phi}{A}}
      \infer5{
        \Kwd{case}\ u\ \Kwd{of}\ \eta\Sub{\ClMod{\phi}}x \hookrightarrow fx \mid \star \hookrightarrow g : Cu
      }
    }
  \end{mathpar}

  Note that we have only assumed an elimination rule for motives valued in $\TpSet$.
\end{defi}

\begin{rem}[Effectivity]\label{rem:effectivity}
  Any equivalence relation valued in $\TpProp$ is effective, as $\TpProp$
  satisfies propositional extensionality; thus the quotient in
  \cref{def:closed-modality} is effective. The only subtlety worth noting
  is that although the equivalence relation $\phi\lor\prn{u=v}$ is by
  definition a pushout in $\TpProp$ of the projections $\phi\leftarrow
  \phi\land\prn{u=v}\rightarrow u=v$, this pushout \emph{need not} be preserved by
  the inclusion $\EmbMor{\TpProp}{\TpSet}$; an example of this behavior in semantics is that
  the union of two regular subobjects need not be regular.
\end{rem}

\begin{fact}\label{fact:tr-sl-modalities}
  A type $A:\TpSet$ is $\phi$-transparent if and only if it is modal for the open
  modality $\OpMod{\phi}{-}$; the type $A:\TpSet$ is $\phi$-sealed if and only if it
  is modal for the closed modality $\ClMod{\phi}{-}$.
\end{fact}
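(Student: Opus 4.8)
The plan is to prove both biconditionals by establishing the characterizing universal properties and then appealing to the definition of modal types. Recall that $A$ is modal for a monad $T$ exactly when its unit $\eta_A : A \to TA$ is an isomorphism, so for each case I must identify the relevant unit map and show that invertibility of that map is equivalent to the stated transparency or sealedness condition.

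For the \emph{open} case, the open modality $\OpMod{\phi}{-}$ is by definition the implicit function space $\phi \to A$, and its unit is precisely the constant map $A \to (\phi \to A)$ sending $a$ to the constant function at $a$. But $A$ being $\phi$-transparent was \emph{defined} to mean that this very constant map $A \to (\phi \to A)$ is an isomorphism. Hence the two conditions are definitionally the same, and the first biconditional is essentially immediate once the unit of $\OpMod{\phi}{-}$ is unwound; I would just verify that the monad unit agrees with the constant map named in the definition of transparency.

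For the \emph{closed} case, the unit is $\eta_{\ClMod{\phi}} : A \to \ClMod{\phi}{A}$, so I must show that this is an isomorphism if and only if the projection $\phi \times A \to \phi$ is an isomorphism, i.e. if and only if $\phi$ implies $A$ is a singleton. The forward direction uses the quotient structure: under the hypothesis $\_:\phi$ the constructor $\star$ identifies all elements of $\ClMod{\phi}{A}$, so if $\eta_{\ClMod{\phi}}$ is invertible then $A$ must itself be a singleton whenever $\phi$ holds, yielding $\phi$-sealedness. For the converse, assuming $A$ is $\phi$-sealed I would construct an inverse to $\eta_{\ClMod{\phi}}$ by case analysis using the elimination rule of \cref{def:closed-modality}: send $\eta\Sub{\ClMod{\phi}}x$ to $x$ and send $\star$ to the unique element $\star : A$ guaranteed by sealedness, checking the side condition $x:A,\_:\phi \vdash x = \star : A$, which holds precisely because $\phi$ forces $A$ to be a singleton. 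One then verifies this map is a two-sided inverse using the $\eta$-rule and the path constructor of the quotient.

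The main obstacle will be the converse of the closed case: constructing the inverse requires careful use of the quotient elimination rule, and in particular discharging the coherence side condition $x:A,\_:\phi \vdash fx = g : C\star$ that the eliminator demands. This is exactly where $\phi$-sealedness is needed, since it supplies both the element $\star:A$ and the identification forcing the two branches to agree under $\phi$. I would also need to confirm that the eliminator's restriction to motives valued in $\TpSet$ suffices here, which it does since we are only building a map into $A:\TpSet$ and proving equalities in $\TpSet$-valued types.
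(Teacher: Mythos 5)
Your proof is correct. The paper states this as a \emph{Fact} and offers no proof at all, so there is nothing to diverge from; your argument is the natural one that the authors evidently have in mind. The open case is indeed definitionally immediate once one observes that the unit of $\OpMod{\phi}{-}$ is the constant map appearing in the definition of $\phi$-transparency. For the closed case, your two directions are both sound: under $\phi$ the path constructor makes $\ClMod{\phi}{A}$ contractible, so invertibility of $\eta_{\ClMod{\phi}}$ forces $A$ to be a singleton under $\phi$; conversely, $\phi$-sealedness supplies exactly the element $\star:A$ and the identification $x:A,\_:\phi\vdash x=\star$ needed to discharge the coherence side condition of the quotient eliminator, and the round-trip $\eta\circ r=\mathrm{id}$ on the $\star$ branch follows from the path constructor. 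Your remark that the $\TpSet$-valued restriction on motives suffices is also right, since $A:\TpSet$ and the requisite equality types land in $\TpProp\subseteq\TpSet$.
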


In light of \cref{fact:tr-sl-modalities}, it is instructive to point out that the
elimination form for the closed modality
  $\prn{\TmBind{x}{u} fx}$ can be implemented by
$
  \Kwd{case}\ u\ \Kwd{of}\ \eta\Sub{\ClMod{\phi}}x \hookrightarrow fx \mid \star\hookrightarrow \star
$.

\paragraph{Extension types}
Given a type $A$, a proposition $\phi$ and an element $a:\OpMod{\phi}{A}$ we
will write $\Compr{A}{\phi\hookrightarrow a}$ for the subtype
$\Compr{x:A}{\OpMod{\phi}{x=a}}\subseteq A$. We will write
$\Compr{A}{\phi\hookrightarrow a,\psi\hookrightarrow b,\ldots}$ for the
extension type $\Compr{A}{\phi\lor \psi\lor \ldots\hookrightarrow
\brk{\phi\hookrightarrow a,\psi\hookrightarrow b,\ldots}}$ when $\phi,\psi$ are disjoint.

\begin{rem}[LRAT axiomatizes the parametricity translation]
  In the synthetic setting, to each type $A$ we may associate the span
  $\prn{\OpMod{\OpnL}{A}}\longleftarrow {A}
  \longrightarrow\prn{\OpMod{\OpnR}{A}}$ given by the units of the open
  modalities for $\OpnL,\OpnR$; this should be thought of as the span that $A$ is
  taken to in the binary parametricity translation. Indeed, given $u\SubL
  :\OpMod{\OpnL}{A}$ and $u\SubR:\OpMod{\OpnR}{A}$ the subtype $\Compr{A}{\OpnL\hookrightarrow u\SubL, \OpnR\hookrightarrow u\SubR}$ can be
  thought of as the type of proofs that $u\SubL$ and $u\SubR$ are related in the
  correspondence.
  It is in this sense that LRAT axiomatizes the parametricity translation.
\end{rem}

A final axiom of the synthetic relational
extension of \iGDTT{} is the \emph{refinement type}.%
\footnote{%
  In prior presentations of synthetic logical relations~\cite{sterling-harper:2021,sterling-angiuli:2021},
  the role played here by the refinement type connective (first axiomatized by Sterling and Harper~\cite{sterling-harper:2022}) was instead played by the so-called
  \emph{realignment} axiom. The two are
  interderivable~\cite[\S5]{gratzer-shulman-sterling:2022:universes}, but the refinement type connective
  is both more direct and provides stronger geometrical intuition.
}
Given a type
$A:\OpMod{\Opn}{\TpSet}$ and a family of $\Opn$-sealed types
$B:\prn{\OpMod{\Opn}{A}}\to\TpSet$, we may consider
$\Sum{x:\OpMod{\Opn}{A}}B x$. The \DefEmph{refinement type} is a new code for
this dependent sum, written $\brk{\Opn\hookrightarrow x:A \mid Bx}$, that
enjoys the following \emph{additional} principle of type equality: the
refinement type becomes equal to $A$ in the presence of $\_:\Opn$.
In other words, the refinement type is governed by the interface presented in \cref{fig:refinement-type}.
We have not assumed that any proposition $\phi:\TpProp$ besides $\Opn$
is equipped with a refinement type connective.

\begin{figure}
  \begin{mathpar}
    \ebrule[formation]{
      \hypo{A:\OpMod{\Opn}{A}}
      \hypo{B:\prn{\OpMod{\Opn}{A}}\to \TpSet}
      \hypo{\OpMod{\Opn}{\forall x:A.\ {Bx \cong 1}}}
      \infer3{
        \brk{\Opn\hookrightarrow x:A\mid Bx} : \Compr{\TpSet}{\Opn\hookrightarrow A}
      }
    }
    \and
    \ebrule[introduction]{
      \hypo{\ldots}
      \hypo{a:\OpMod{\Opn}{A}}
      \hypo{b:Ba}
      \infer3{
        \brk{\Opn\hookrightarrow a\mid b} : \Compr{\brk{\Opn\hookrightarrow x:A\mid Bx}}{\Opn\hookrightarrow a}
      }
    }
    \and
    \ebrule[elimination]{
      \hypo{\ldots}
      \hypo{u:\brk{\Opn\hookrightarrow x:A\mid Bx}}
      \infer2{
        \underline{u} : B u
      }
    }
    \and
    \ebrule[computation]{
      \hypo{\ldots}
      \hypo{a:\OpMod{\Opn}{A}}
      \hypo{b:Ba}
      \infer3{
        \underline{\brk{\Opn\hookrightarrow a\mid b}} = b : B a
      }
    }
    \and
    \ebrule[uniqueness]{
      \hypo{\ldots}
      \hypo{u:\brk{\Opn\hookrightarrow x:A\mid Bx}}
      \infer2{
        u = \brk{\Opn\hookrightarrow u \mid \underline{u}} : \brk{\Opn\hookrightarrow x:A\mid Bx}
      }
    }
  \end{mathpar}
  \caption{The interface to the \emph{refinement type} connective.}
  \label{fig:refinement-type}
\end{figure}

\begin{nota}[Copattern notation]
  We will often define elements of $\brk{\Opn\hookrightarrow x:A\mid B x}$ using
  Agda-style ``copattern'' notation; for instance, the pair of declarations on
  the left is meant to denote the single declaration on the right:
  \[
    \begin{bmatrix*}[l]
      {\Opn}\hookrightarrow{u \eqdef a}\\
      \underline{u}\eqdef b
    \end{bmatrix*}
    \rightsquigarrow
    \prn{u \eqdef \brk{\Opn\hookrightarrow a \mid b}}
  \]
\end{nota}

\subsection{The state monad in \texorpdfstring{\iGDTTRefLRAT{}}{iGDTT/ref/lrat}}

In \cref{sec:dependent:axiomatization}, we extended \iGDTT{} with general
reference types and a state monad $\TpT*$. In this section, we add to the
axiomatization of \iGDTTRefLRAT{} constructs governing the weak bisimulation of
stateful computations. The following two rules make it possible to exhibit a
correspondence between two computations that take different numbers of steps:
\begin{mathpar}
  \ebrule{
    \infer0{
      \TmStep\SubL  : \Compr{\TpT{\TpUnit}}{\OpnL\hookrightarrow \TmStep, \OpnR\hookrightarrow \TmRet\TmUnit}
    }
  }
  \and
  \ebrule{
    \infer0{
      \TmStep\SubR  : \Compr{\TpT{\TpUnit}}{\OpnL\hookrightarrow \TmRet\TmUnit, \OpnR\hookrightarrow \TmStep}
    }
  }
\end{mathpar}

\begin{exa}
  With the above in hand, it is possible exhibit a correspondence between two
  computations that access the memory different numbers of times. Let
  $r:\TpRef{\TpInt}$ and consider the programs $M\SubL,M\SubR$ defined
  below:

  \iblock{
    \mrow{
      M\SubL \eqdef
      \TmBind{x}{\TmGet{\TpInt}{r}}
      \TmSet{\TpInt}{r}{\prn{x+1}};
      \TmBind{y}{\TmGet{\TpInt}{r}}
      \TmSet{\TpInt}{r}{\prn{y+1}};
      \TmRet\TmUnit
    }
    \mrow{
      M\SubR\eqdef
      \TmBind{x}{\TmGet{\TpInt}{r}}
      \TmSet{\TpInt}{r}{\prn{x+2}};
      \TmRet\TmUnit
    }
  }

  We may define a correspondence $\mathcal{M} :
  \Compr{\TpT{\TpUnit}}{\OpnL\hookrightarrow M\SubL,\OpnR\hookrightarrow M\SubR}$ as
  follows:

  \iblock{
    \mrow{
      \mathcal{M} \eqdef
      \TmBind{x}{\TmGet{\TpInt}{r}}
      \TmStep\SubL;
      \TmSet{\TpInt}{r}{\prn{x+2}};
      \TmRet\TmUnit
    }
  }

  While $\mathcal{M}$ is evidently of type $\TpT{\TpUnit}$, it remains to argue that
  it is equal to $M\SubL$ (resp. $M\SubR$) under the assumption $\OpnL$
  (resp. $\OpnR$). Both claims follow from equational reasoning; assuming
  $\OpnL$ we compute:
  \begin{align*}
    \OpnL\hookrightarrow M\SubL
    &=
    \TmBind{x}{\TmGet{\TpInt}{r}}
    \Alert{
      \TmSet{\TpInt}{r}{\prn{x+1}};
      \TmBind{y}{\TmGet{\TpInt}{r}}
    }\,
    \TmSet{\TpInt}{r}{\prn{\Alert{y}+1}};
    \TmRet\TmUnit
    \\
    &=
    \TmBind{x}{\TmGet{\TpInt}{r}}
    \TmStep;
    \Alert{
      \TmSet{\TpInt}{r}{\prn{x+1}};
      \TmSet{\TpInt}{r}{\prn{x+1+1}}
    };
    \TmRet\TmUnit
    \\
    &=
    \TmBind{x}{\TmGet{\TpInt}{r}}
    \TmStep;
    \TmSet{\TpInt}{r}{\prn{\Alert{x+1+1}}};
    \TmRet\TmUnit
    \\
    &=
    \TmBind{x}{\TmGet{\TpInt}{r}}
    \Alert{\TmStep};
    \TmSet{\TpInt}{r}{\prn{x+2}};
    \TmRet\TmUnit
    \\
    &=
    \TmBind{x}{\TmGet{\TpInt}{r}}
    \TmStep\SubL;
    \TmSet{\TpInt}{r}{\prn{x+2}};
    \TmRet\TmUnit
    \\
    &=
    \mathcal{M}
  \end{align*}

  Likewise, assuming $\OpnR$ we have the following:
  \[
    \OpnR\hookrightarrow M\SubR
    =
    \TmBind{x}{\TmGet{\TpInt}{r}}
    \TmSet{\TpInt}{r}{\prn{x+2}};
    \TmRet\TmUnit
    =
    \TmBind{x}{\TmGet{\TpInt}{r}}
    \TmStep\SubL;
    \TmSet{\TpInt}{r}{\prn{x+2}};
    \TmRet\TmUnit
    =
    \mathcal{M}
  \]
\end{exa}

\subsection{Case study: local references and closures}\label{sec:case-study:local-references}

In this section we consider a synthetic version of an example from Birkedal,
St\o{}vring, and Thamsborg~\cite[\S6.3]{birkedal-stovring-thamsborg:2010}
concerning a correspondence between two imperative counter modules constructed
using a local reference and two closures. Whereas \opcit needed to work inside
the model, the combination of dependent types and synthetic relational
constructs enables us to work directly in the type theory. Consider the
following two implementations of an imperative counter module using local
references, one of which counts up and the other counts down:

\iblock{
  \mrow{
    M\SubL,M\SubR : \TpT\prn{\TpT{\TpUnit} \times \TpT{\TpInt}}
  }

  \mrow{
    M\SubL \eqdef
    \prn{
      \TmBind{r}{\TmNew{\TpInt}{0}}
      \TmRet\prn{
        \Con{incr}\,r,
        \TmGet{\TpInt}{r}
      }
    }
  }

  \mrow{
    M\SubR \eqdef
    \prn{
      \TmBind{r}{\TmNew{\TpInt}{0}}
      \TmRet\prn{
        \Con{decr}\,r,
        \TmBind{x}{\TmGet{\TpInt}{r}}
        \TmRet\prn{-x}
      }
    }
  }

  \row

  \mrow{
    \Con{incr},\Con{decr} : \TpRef{\TpInt}\to \TpT{\TpUnit}
  }
  \mrow{
    \Con{incr}\,r \eqdef
    \TmBind{x}{\TmGet{\TpInt}{r}}
    \TmSet{\TpInt}{r}{\prn{x+1}}
  }
  \mrow{
    \Con{decr}\,r \eqdef
    \TmBind{x}{\TmGet{\TpInt}{r}}
    \TmSet{\TpInt}{r}{\prn{x-1}}
  }
}

We wish to construct a correspondence between $M\SubL$ and $M\SubR$, \ie an
element
$\mathcal{M}:\Compr{\TpT\prn{\TpT{\TpUnit}\times\TpT{\TpInt}}}{\OpnL\hookrightarrow
M\SubL,\OpnR\hookrightarrow M\SubR}$. To do so, we define a correspondence on
$\TpInt$ describing the heap invariant using a refinement type:

\iblock{
  \mrow{
    \mathcal{Z} : \Compr{\TpSet}{\Opn\hookrightarrow \TpInt}
  }
  \mrow{
    \mathcal{Z} \eqdef
    \brk{
      \Opn\hookrightarrow x : \TpInt\mid
      \ClMod{\Opn}{
        \Compr{
          \prn{
            i\SubL : \Compr{\TpInt}{\OpnL\hookrightarrow x},
            i\SubR : \Compr{\TpInt}{\OpnR\hookrightarrow x}
          }
        }{
          i\SubL = -i\SubR
        }
      }
    }
  }
}

\begin{exegesis}
  It is important to understand why the invariant $\mathcal{Z}$ is defined the way it is.
  \begin{enumerate}

    \item First we attach an element $x:\OpMod{\Opn}{\TpInt}$, which is in
      particular a pair of a ``left'' element $x\SubL
      :\OpMod{\OpnL}{\TpInt}$ and a ``right'' element
      $x\SubR:\OpMod{\OpnR}{\TpInt}$ with no conditions whatsoever.

    \item We wish to assert that $x\SubR$ is the negation of $x\SubL$; but this
      is not type correct, because these partial integers lie in disjoint
      worlds.

    \item Instead we will glue onto the partial integer $x =
      \brk{\OpnL\hookrightarrow x\SubL,\OpnR\hookrightarrow x\SubR}$ a pair of
      total integers $i\SubL,i\SubR:\TpInt$ restricting under
      $\OpnL,\OpnR$ to $x\SubL,x\SubR$ such that $i\SubL= -i\SubR$.

    \item It is necessary to hide the attachment of $\prn{i\SubL,i\SubR}$ to
      $x$ underneath the closed modality $\ClMod{\Opn}{-}$, since otherwise
      $\mathcal{Z}$ would restrict to strictly more than just $\TpInt$ under $\Opn$. Without the closed modality, the formation rule for the refinement type employed here would not apply.
  \end{enumerate}
\end{exegesis}

With the invariant $\mathcal{Z}$ in hand, we will construct a \emph{new}
counter implementation $\mathcal{M}$ that allocates an element of
$\mathcal{Z}$; this new counter implementation should restrict ``on the left''
to $M\SubL$ and ``on the right'' to $M\SubR$. We divide the construction into
three main subroutines $\Con{init}\Sub{\mathcal{M}}$,
$\Con{tick}\Sub{\mathcal{M}}$, and $\Con{read}\Sub{\mathcal{M}}$:

\iblock{
  \mrow{
    \mathcal{M} : \Compr{\TpT\prn{\TpT{\TpUnit}\times \TpT{\TpInt}}}{
      \OpnL\hookrightarrow M\SubL,
      \OpnR\hookrightarrow M\SubR,
    }
  }

  \mrow{
    \mathcal{M} \eqdef
    \prn{
      \TmBind{r}{\TmNew{\mathcal{Z}}{\Con{init}\Sub{\mathcal{M}}}}
      \TmRet\prn{
        \Con{tick}\Sub{\mathcal{M}}\,r,
        \Con{read}\Sub{\mathcal{M}}\,r
      }
    }
  }
}

For the initialization, we simply construct an element of $\mathcal{Z}$ witnessing $0=-0$.

\iblock{
  \mrow{
    \Con{init}\Sub{\mathcal{M}} : \Compr{\mathcal{Z}}{\Opn\hookrightarrow 0}
  }
  \mrow{
    {\Opn}\hookrightarrow{\Con{init}\Sub{\mathcal{M}} \eqdef 0}
  }
  \mrow{
    \underline{\Con{init}\Sub{\mathcal{M}}} \eqdef \eta\Sub{\ClMod{\Opn}}\prn{0,0}
  }
}

Next we describe the operation that ticks the counter, factoring through an
auxiliary operation $\Con{upd}\Sub{\mathcal{M}}$ that witnesses the
preservation of the $\mathcal{Z}$ correspondence by the change to the reference
cell.

\iblock{
  \mrow{
    \Con{tick}\Sub{\mathcal{M}} : \Compr{\TpRef{\mathcal{Z}}\to \TpT{\TpUnit}}{\OpnL\hookrightarrow\Con{incr},\OpnR\hookrightarrow\Con{decr}}
  }
  \mrow{
    \Con{tick}\Sub{\mathcal{M}} r \eqdef
    \TmBind{x}{\TmGet{\mathcal{Z}}{r}}
    \TmSet{\mathcal{Z}}{r}{
      \prn{\Con{upd}\Sub{\mathcal{M}} x}
    }
  }

  \row

  \mrow{
    \Con{upd}\Sub{\mathcal{M}} : \Compr{\mathcal{Z}\to\mathcal{Z}}{\OpnL\hookrightarrow \lambda x.x+1,\OpnR\hookrightarrow\lambda x.x-1}
  }
  \mrow{
    {\OpnL}\hookrightarrow{\Con{upd}\Sub{\mathcal{M}} x \eqdef x+1}
  }
  \mrow{
    {\OpnR}\hookrightarrow{\Con{upd}\Sub{\mathcal{M}} x \eqdef x-1}
  }
  \mrow{
    \underline{\Con{upd}\Sub{\mathcal{M}} x} \eqdef
    \TmBind{\prn{i\SubL,i\SubR}}{\underline{x}}
    \eta\Sub{\ClMod{\Opn}}\prn{
      i\SubL +1,
      i\SubR -1
    }
  }
}

In the definition of $\underline{\Con{upd}\Sub{\mathcal{M}}x}$, we have used
the Kleisli extension for the closed modality $\ClMod{\Opn}{-}$. The
well-typedness of our definition follows from the fact that $i\SubL=-i\SubR$
implies $i\SubL+1=-\prn{i\SubR-1}$.
Finally we
construct a function to read the contents of the local state, factoring through
a function $\Con{tidy}\Sub{\mathcal{M}}$ that verifies the correspondence
between $x$ and $-x$ in the output:

\iblock{
  \mrow{
    \Con{read}\Sub{\mathcal{M}} : \Compr{\TpRef{\mathcal{Z}}\to \TpT{\TpInt}}{\OpnL\hookrightarrow \lambda r.\TmGet{\TpInt}{r}, \OpnR\hookrightarrow \lambda r.\TmBind{x}{\TmGet{\TpInt}{r}} \TmRet\prn{-x}}
  }
  \mrow{
    \Con{read}\Sub{\mathcal{M}} r \eqdef
    \TmBind{x}{\TmGet{\mathcal{Z}}{r}}
    \TmRet\prn{
      \Con{tidy}\Sub{\mathcal{M}}x
    }
  }

  \row

  \mrow{
    \Con{tidy}\Sub{\mathcal{M}} : \Compr{\mathcal{Z}\to\TpInt}{\OpnL\hookrightarrow \lambda x. x, \OpnR\hookrightarrow \lambda x. {-x}}
  }

  \mhang{
    \Con{tidy}\Sub{\mathcal{M}} x \eqdef
  }{
    \mhang{
      \Kwd{case}\ \underline{x}\ \Kwd{of}
    }{
      \mrow{
        \eta\Sub{\ClMod{\Opn}}\prn{i\SubL,i\SubR}\hookrightarrow i\SubL
      }
      \mrow{
        \star \hookrightarrow \brk{\OpnL\hookrightarrow x,\OpnR\hookrightarrow -x}
      }
    }
  }
}

\paragraph{Bisimulation of denotations}
By interpreting the correspondence $\mathcal{M}$ in our semantic model (see
\cref{sec:semantics}), it will follow that the results of initializing
$M\SubL,M\SubR$ with any heap are \emph{weakly bisimilar}, \ie equal up
to computation steps. In fact, this particular example exhibits a \emph{strong}
bisimulation.

\subsection{Case study: correspondences in abstract data types}\label{sec:case-study:adt}

We may adapt our previous example to exhibit a simulation between two
implementations of an abstract data type for imperative counters as in
Birkedal, St\o{}vring, and Thamsborg~\cite[\S6.2]{birkedal-stovring-thamsborg:2010}. In particular, consider the
following abstract data type:

\iblock{
  \mrow{
    \Con{COUNTER} \eqdef \Exists{\alpha:\TpSet} \TpT\alpha \times \prn{\alpha\to \TpT{\TpUnit}} \times \prn{\alpha\to \TpT{\TpInt}}
  }
}

We consider the following two implementations of $\Con{COUNTER}$:

\iblock{
  \mrow{
    M\SubL \eqdef
    \Con{pack}\,\prn{
      \TpRef{\TpInt}, \prn{\TmNew{\TpInt}{0}, \Con{incr}, \lambda r. \TmGet{\TpInt}{r}}
    }
  }
  \mrow{
    M\SubR \eqdef
    \Con{pack}\,\prn{
      \TpRef{\TpInt}, \prn{\TmNew{\TpInt}{0}, \Con{decr}, \lambda r. \TmBind{x}{\TmGet{\TpInt}{r}} \TmRet\prn{-x}}
    }
  }
}

We may construct a correspondence between these two imperative counter
structures as follows, re-using the constructions of
\cref{sec:case-study:local-references}.

\iblock{
  \mrow{
    \mathcal{M} : \Compr{\Con{COUNTER}}{\OpnL\hookrightarrow M\SubL,\OpnR\hookrightarrow M\SubR}
  }
  \mrow{
    \mathcal{M} = \Con{pack}\,\prn{
      \TpRef{\mathcal{Z}}, \prn{
        \TmNew{\mathcal{Z}}{\Con{init}\Sub{\mathcal{M}}},
        \Con{tick}\Sub{\mathcal{M}},
        \Con{read}\Sub{\mathcal{M}}
      }
    }
  }
}

\paragraph{Bisimulation of denotations}
As in \cref{sec:case-study:local-references}, by interpreting the
correspondence $\mathcal{M}$ into the semantic model of \cref{sec:semantics} we
may exhibit a weak (in fact, strong) bisimulation between the denotations of
$M\SubL$ and $M\SubR$ when initialized with any heap.

\section{Semantic models of \texorpdfstring{\iGDTT}{iGDTT} and \texorpdfstring{\iGDTTRefLRAT}{iGDTT/ref/lrat}}\label{sec:semantics}

Thus far we have introduced a series of type theories---\iGDTT{}, \iGDTTRef{},
and \iGDTTRefLRAT{}---but we have not yet proven that these type theories are
consistent. In this section we justify all three theories by constructing
semantic models.

In \cref{sec:semantics:base-model} we show to build a model of \iGDTT{} based
combining realizability and the topos of trees. In
\cref{sec:semantics:hofmann-streicher}, we show that presheaves internal to a
model of \iGDTT{} also assemble into a model of \iGDTT{}. Using this, we then
lift a base model to a model of \iGDTTRefLRAT{} in \cref{sec:semantics:kripke}
by taking presheaves on a carefully chosen category and constructing a more
sophisticated version of the possible-worlds model introduced in
\cref{sec:easy}.
It proves most convenient to pass directly from a model of \iGDTT{} to a model
of \iGDTTRefLRAT{} rather than constructing an intermediate model of
\iGDTTRef{}. Of course, any model of \iGDTTRefLRAT{} is (by restriction) a
model of \iGDTTRef{}.

The construction of this sequence of models immediately yields the consistency
of all three type theories. The particular models, however, show more than
this. We use our model constructions to argue that a relation constructed in
\iGDTTRefLRAT{} induces a weak bisimulation between the related programs in the
induced model of \iGDTTRef{}.

\begin{rem}
  Unlike the rest of the paper, throughout this section we assume knowledge of
  category theory. In particular, in addition to the basic concepts of category
  theory we assume some familiarity with categorical realizability (assemblies,
  modest sets, \etc). We refer the reader to Van Oosten~\cite{van-oosten:2008}
  or Streicher~\cite{streicher:2017-2018} for a thorough introduction.
\end{rem}

\subsection{Constructing base models of \texorpdfstring{\iGDTT}{iGDTT}}\label{sec:semantics:base-model}

A concrete model of \iGDTT{} can be constructed using a combination of
realizability and presheaves. Rather than belabor the already well-studied
interpretation of dependent type theory into well-adapted categories (\eg
locally cartesian closed categories, topoi, \etc{})~\cite{hofmann:1997}, we
describe and exhibit the relevant categorical structure needed to interpret the
nonstandard aspects of \iGDTT{}: impredicative universes and guarded recursion.

Let $\SCat$ be a realizability topos and let $\prn{\mathbb{O},\leq,\prec}$ be a
\DefEmph{separated intuitionistic well-founded poset} in $\SCat$, \ie a poset
$\prn{\mathbb{O},\leq}$ equipped with a transitive subrelation
${\prec}\subseteq{\leq}\subseteq\mathbb{O}\times\mathbb{O}$ satisfying the
following additional conditions:
\begin{enumerate}
  \item \emph{separation}: the object $\mathbb{O}$ is an assembly and both ${\leq},{\prec}$ are regular subobjects,
  \item \emph{left compatibility}: if $u\leq v$ and $v\prec w$ then $u \prec w$,
  \item \emph{right compatibility}: if $u \prec v$ and $v\leq w$ then $u\prec w$,
  \item \emph{well-foundedness}: every element of $\mathbb{O}$ is
    $\prec$-accessible, where the $\prec$-accessible elements are defined to be
    the smallest subset $I\subseteq \mathbb{O}$ such that if $v\in I$ for all
    $v\prec u$, then $u\in I$.
\end{enumerate}

\begin{rem}
  In classical mathematics, the appropriate notion of well-founded poset
  is considerably simpler as we tend to define $x \prec y \Leftrightarrow x\leq y
  \land x \not= y$. This simpler style of well-founded order is particularly
  inappropriate for intuitionistic mathematics (\eg the mathematics of a
  realizability topos), in which $\mathbb{O}$ need not have
  decidable equality. Our definition of intuitionistic well-founded posets is
  inspired by Taylor's analysis of intuitionistic ordinals~\cite{taylor:1996}.
\end{rem}

Given a hierarchy of Grothendieck universes $\mathscr{U}_i$ in $\SET$ such that
$\Gamma\mathbb{O}\in\mathscr{U}_0$, we conclude:

\begin{thm}\label{thm:igdtt-model}
  The category of internal diagrams\footnote{We refer to Borceux~\cite[\S8.1]{borceux:1994:vol1} for discussion of internal diagrams.}
  $\ECat = \brk{\OpCat{\mathbb{O}},\SCat}$ is a
  model of \iGDTT{} in which:
  \begin{enumerate}

    \item the predicative universes $\TpType_i$ are modelled by the
      Hofmann--Streicher liftings~\cite{hofmann-streicher:1997} of the universes of $\mathscr{U}_i$-small assemblies in
      $\SCat$,

    \item the impredicative universes $\TpProp,\TpSet\in\TpType_i$ is modelled
      by the Hofmann--Streicher liftings of the universes of $\lnot\lnot$-closed
      propositions and of modest sets in $\SCat$ respectively,

    \item the later modality $\Ltr$ is computed explicitly by the equation $\prn{\Ltr{A}}u = \Lim{\mathrlap{v\prec u}} Av$.

  \end{enumerate}
\end{thm}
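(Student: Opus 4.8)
The plan is to equip $\ECat = \brk{\OpCat{\mathbb{O}},\SCat}$ with the three structures that distinguish \iGDTT{} from ordinary extensional Martin-L\"of type theory---the predicative hierarchy, the two impredicative universes, and the guarded fragment---on top of a standard interpretation of dependent type theory. First I would observe that, because $\mathbb{O}$ is an internal poset in the topos $\SCat$, the category of internal diagrams $\ECat$ is itself a topos (the topos of internal presheaves on $\mathbb{O}$); in particular it is locally cartesian closed and has a natural numbers object, disjoint coproducts, and W-types. This already interprets the dependent products, dependent sums, finite enumerations $\brk{n}$, inductive types, and extensional equality types with reflection, so that these clauses reduce to the well-adapted interpretation of dependent type theory in a topos, for which I would cite Hofmann~\cite{hofmann:1997}.

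Next I would construct the universe structure. The predicative universes $\TpType_i$ are the Hofmann--Streicher liftings~\cite{hofmann-streicher:1997} of the universes of $\mathscr{U}_i$-small assemblies, and closure under the type formers is inherited stagewise from $\SCat$. For the impredicative universes, the essential input is that in a realizability topos the universe of modest sets is \emph{internally complete}---closed under products indexed by arbitrary assemblies, as in Hyland~\cite{hyland:1988}---and that the $\lnot\lnot$-closed propositions form a complete lattice closed under the requisite logical operations, in particular under the arbitrary meets that model impredicative universal quantification. Lifting each along the Hofmann--Streicher construction yields $\TpSet$ and $\TpProp$. Proof-irrelevance of $\TpProp$ is automatic since its elements are subterminal, and univalence (propositional extensionality) holds because mutual implication of $\lnot\lnot$-closed propositions is equality in $\SCat$; both properties are preserved by the lifting. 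The impredicativity rule, which forms $\Forall{x:A}{Bx}$ in $\mathbb{S}\in\brc{\TpSet,\TpProp}$ for $A:\TpType_i$ \emph{large}, is interpreted as an internal product over $\mathbb{O}$, and lands back in $\mathbb{S}$ precisely because modest sets (respectively, $\lnot\lnot$-closed propositions) are closed under the large products that compute the relevant right adjoint.

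The guarded fragment I would handle by exhibiting $(\mathbb{O},\leq,\prec)$ as an instance of the relative-topos-theoretic synthetic guarded domain theory of Palombi and Sterling~\cite{palombi-sterling:2023}. Concretely, the formula $\prn{\Ltr{A}}u = \Lim{v\prec u} Av$ defines an endofunctor whose functorial action in $u$ uses the left- and right-compatibility conditions relating $\prec$ and $\leq$; the point $\Next$ restricts a section at $u$ to the compatible family of its restrictions along $v\prec u$, and the separation conditions (that $\mathbb{O}$ is an assembly and that $\leq,\prec$ are regular subobjects) ensure that these limits and the accessibility predicate are interpretable internally to $\SCat$. The guarded fixed-point combinator and its unfolding law then arise from $\prec$-well-founded (L\"ob) induction, which is available exactly because every element of $\mathbb{O}$ is $\prec$-accessible. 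Closure of $\TpProp,\TpSet,\TpType_i$ under $\Ltr$ follows because the defining limit is a small limit already admitted by each universe, and the delayed-substitution apparatus of \cref{sec:igdtt} transfers from the general framework.

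The main obstacle I expect is the compatibility of impredicativity with the presheaf and guarded structure: verifying that the Hofmann--Streicher lifting of the modest-set universe remains closed under the impredicative quantifier after passing to $\ECat$, since the bound variable ranges over the \emph{large} universe $\TpType_i$ while the product must be assembled stagewise over $\mathbb{O}$. This is where the internal completeness of modest sets in $\SCat$ must be shown to interact correctly with the Hofmann--Streicher construction, and where one must check that the intuitionistic, $\lnot\lnot$-closed setting does not disrupt the size bookkeeping. By contrast, once the axioms of Palombi and Sterling are in place, the well-foundedness-driven fixed points and the behaviour of $\Ltr$ should be essentially routine.
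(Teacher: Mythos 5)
Your proposal takes essentially the same route as the paper: interpret the standard type formers via the topos structure of $\ECat$, model the universes by Hofmann--Streicher lifting, observe that lifting preserves impredicativity of $\TpSet$ and $\TpProp$ because $\mathbb{O}$ is small relative to the upper universe (this is exactly what the hypothesis $\Gamma\mathbb{O}\in\mathscr{U}_0$ buys, and what the paper records as ``the Hofmann--Streicher lifting of a pair of universes preserves impredicativity of the lower universe when $\mathbb{O}$ is internal to the upper universe''), and delegate the guarded fragment to the general results of Palombi and Sterling. The one point you do not address---and which the paper singles out as the only subtle one---is the closure of $\TpProp$ under extensional equality types: for $a,b:A$ with $A$ classified by $\TpSet$ or $\TpType_i$, the proposition $a=b$ must be $\lnot\lnot$-closed, which holds because assemblies (and hence modest sets) are $\lnot\lnot$-separated, meaning precisely that their equality predicates are $\lnot\lnot$-closed. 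Your discussion of $\TpProp$ covers proof-irrelevance and univalence but not this clause; it is a small addition, but without it the stated closure properties of $\TpProp$ are not fully verified.
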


\begin{proof}
  The guarded recursive aspects follow from the general results of
  Palombi and Sterling~\cite{palombi-sterling:2023}; for the rest, it can be seen that the
  Hofmann--Streicher lifting of a pair of universes preserves impredicativity
  of the lower universe when $\mathbb{O}$ is internal to the upper universe.
  Our interpretation of $\TpProp$ is automatically univalent and satisfies
  proof irrelevance, and the only subtle point is that it is closed under
  equality. This follows because every type classified by
  either $\TpSet$ or $\TpType$ is $\lnot\lnot$-separated (since they are
  assemblies), which means precisely that their equality predicates are
  $\lnot\lnot$-closed.
\end{proof}

We describe the ``standard'' instantiation of \cref{thm:igdtt-model} in \cref{ex:eff-psh} below.

\begin{exa}[Standard model of \iGDTT{}]\label{ex:eff-psh}
  Let $\Kwd{Eff}$ be the \emph{effective topos}~\cite{hyland:1982}, namely the
  realizability topos constructed from Kleene's first algebra; there is a
  modest intuitionistic well-founded poset $\omega$ in $\Kwd{Eff}$ given by the
  natural numbers object of the topos under inequality and strict inequality.
  Then $\brk{\OpCat{\omega},\Kwd{Eff}}$ is
  model of \iGDTT{} according to \cref{thm:igdtt-model}.
\end{exa}

From the non-emptiness of $\omega$ it follows immediately that
$\brk{\OpCat{\omega},\Kwd{Eff}}$ is a non-trivial topos, \ie one in which the
initial object is not inhabited. Thus we obtain the following corollary:

\begin{cor}\label{cor:igdtt-consistent}
  \iGDTT{} is consistent.
\end{cor}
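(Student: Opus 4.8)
The plan is to read off consistency directly from the non-trivial model exhibited just above. First I would pin down what consistency means for \iGDTT{}: the theory is consistent exactly when there is no closed derivation of $\bot$, i.e.\ no closed term inhabiting the empty type. Semantically this corresponds to the interpretation of $\bot$ --- the initial object $0$ of the model --- admitting no global element $1\to 0$, so it suffices to exhibit \emph{some} model of \iGDTT{} in which the initial object is uninhabited.

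Second, I would invoke soundness of the interpretation. By \cref{ex:eff-psh}, the internal diagram category $\brk{\OpCat{\omega},\Kwd{Eff}}$ is a model of \iGDTT{} (a concrete instance of \cref{thm:igdtt-model}). Soundness then guarantees that any closed term $\vdash e:\bot$ in \iGDTT{} would be interpreted in the empty context --- whose denotation is the terminal object --- as a global element of the interpretation of $\bot$, that is, as a map $1\to 0$ into the initial object of $\brk{\OpCat{\omega},\Kwd{Eff}}$.

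Third, I would appeal to the non-triviality observation recorded immediately before the statement: since $\omega$ is non-empty and $\Kwd{Eff}$ is itself a non-trivial topos, the initial object of $\brk{\OpCat{\omega},\Kwd{Eff}}$ --- computed pointwise in $\Kwd{Eff}$ --- is uninhabited. A global element of this initial object is therefore impossible, contradicting the existence of the hypothetical closed term $e$. Hence \iGDTT{} is consistent.

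The argument is essentially immediate once the model is in hand, so I do not expect any substantial obstacle at this stage; the real work has already been carried out in \cref{thm:igdtt-model} and \cref{ex:eff-psh}. If anything merits a moment's care, it is the identification of the syntactic $\bot$ with the pointwise initial object and the fact that the internal-diagram category inherits non-triviality from $\Kwd{Eff}$, both of which follow from the pointwise computation of (co)limits in $\brk{\OpCat{\omega},\Kwd{Eff}}$.
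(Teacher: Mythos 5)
Your proposal is correct and follows the same route as the paper, which likewise derives consistency immediately from the non-triviality of the topos $\brk{\OpCat{\omega},\Kwd{Eff}}$ exhibited in \cref{ex:eff-psh}. You have merely spelled out the soundness step and the identification of $\bot$ with the (uninhabited) initial object, which the paper leaves implicit.
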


\begin{rem}[Strong and weak completeness]
  It is important that each $\TpType_i$ be built from a subuniverse of small \emph{assemblies}
  rather than of more general types in $\SCat$: the full internal subcategory spanned by modest sets
  is strongly complete over assemblies, but only \emph{weakly} complete over the rest of
  $\SCat$~\cite{hyland-robinson-rosolini:1990}. Weak completeness is insufficient to interpret
  $\ForallSymbol$.
\end{rem}

\subsection{The Hofmann--Streicher lifting of a model of \texorpdfstring{\iGDTT}{iGDTT}}\label{sec:semantics:hofmann-streicher}

Let $\ECat$ be an elementary topos model of \iGDTT{} \eg{}, \cref{ex:eff-psh}; we
will show that for certain internal categories $\CICat$ in $\ECat$, the topos of
internal diagrams $\brk{\CICat,\ECat}$ is a model of \iGDTT{}. To avoid
confusion, we will write things like $\TpSet,\TpType_i$ for constructs of
$\ECat$ and $\Ol{\TpSet},\Ol{\TpType_i}$ for the corresponding constructs that
we will define in the category of internal diagrams $\brk{\CICat,\ECat}$.

Let $\CICat$ be a category internal to $\TpType_0$, \ie a category object in
$\ECat$ whose object of objects and hom objects are classified by the universe
$\TpType_0$. Then following
Hofmann and Streicher~\cite{hofmann-streicher:1997} we may lift each universe
$\mathbb{X}\in\brc{\TpProp,\TpSet,\TpType_i}$ of $\ECat$ to a universe
$\Ol{\mathbb{X}}$ in $\brk{\CICat,\ECat}$, setting $\Ol{\mathbb{X}}c =
\brk{c\downarrow\CICat,\mathbb{X}}$.
The only surprise is that $\Ol{\TpSet}$ remains classified by $\Ol{\TpType_i}$,
considering the fact that the coslice $c\downarrow\CICat$ is large relative to
$\TpSet$. Nevertheless we have $\Ol{\TpSet}\in\Ol{\TpType_i}$ because $\TpSet$
is impredicative in $\TpType_i$; by an explicit computation, it follows that
$\Ol{\TpSet}$ can be encoded by an element of $\Ol{\TpType_i}$ using only
$\Forall{}$, $\Sum{}$, and $\prn{=}$ in each fiber. The impredicativity of
$\TpSet$ in $\TpType_i$ likewise ensures that $\Ol{\TpSet}$ is impredicative
over $\Ol{\TpType_i}$.

Closure under the constructs of guarded dependent type theory then follows once
again from the general results of Palombi and
Sterling~\cite{palombi-sterling:2023} concerning the pointwise lifting of
guarded recursion from the base into presheaves. It can also be seen that good
properties of the later modality are preserved by this lifting; if $\ECat$ is
\DefEmph{globally adequate} in the sense of \opcit and $\CICat$ has an initial
object, then the $\brk{\CICat,\ECat}$ is also globally adequate. Global
adequacy means that the global points of $\Ltr{\mathbb{N}}$ are the actual
natural numbers. Summarizing:

\begin{lem}\label{lem:igdtt-model-hs-lifting}
  Let $\ECat$ be an elementary topos model of \iGDTT{} and let $\CICat$ be a category internal to $\TpType_0\in\ECat$.
  The category of internal diagrams $\brk{\CICat,\ECat}$ is a model of \iGDTT{} in which:
  \begin{enumerate}

    \item all universes are modelled by the
      Hofmann--Streicher liftings~\cite{hofmann-streicher:1997} of the corresponding universes in $\ECat$;

    \item the later modality $\Ltr$ is computed pointwise, \ie we have
      $\prn{\Ltr{A}}c=\Ltr\prn{Ac}$.

  \end{enumerate}
\end{lem}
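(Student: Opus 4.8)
The plan is to reduce the lemma to three observations: that $\brk{\CICat,\ECat}$ is itself an elementary topos, that the Hofmann--Streicher lifting transports the impredicative universe structure, and that guarded recursion lifts pointwise. First I would dispense with the extensional Martin-L\"of core. Since $\ECat$ is an elementary topos and $\CICat$ is internal to $\TpType_0\in\ECat$, the category of internal diagrams $\brk{\CICat,\ECat}$ is again an elementary topos, hence a model of extensional type theory with dependent products, dependent sums, finite enumerations, W-types, and equality types with reflection~\cite{hofmann:1997,hofmann-streicher:1997}. None of this requires any work beyond citing the general interpretation of type theory into a topos; what remains is to exhibit the nonstandard structure, namely the universe tower $\Ol{\TpProp}\subseteq\Ol{\TpSet}\subseteq\Ol{\TpType_i}$ with $\Ol{\TpProp},\Ol{\TpSet}$ impredicative, and the later modality.

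Next I would define the universes by the Hofmann--Streicher lifting $\Ol{\mathbb{X}}c\eqdef\brk{c\downarrow\CICat,\mathbb{X}}$ and check the closure conditions together with the containment $\Ol{\TpSet}\in\Ol{\TpType_i}$. The hard part, and the main obstacle, will be reconciling the \emph{size} of the coslice with impredicativity. Because $\CICat$ is internal to $\TpType_0$, each coslice $c\downarrow\CICat$ is $\TpType_0$-small but \emph{not} $\TpSet$-small; consequently, when $\Ol{A}:\Ol{\TpType_i}$ and $\Ol{B}$ is valued in $\Ol{\TpSet}$, the fiber $\prn{\Forall{x:\Ol{A}}{\Ol{B}\,x}}c$ of the impredicative dependent product is computed as a limit, indexed by the $\TpType_0$-small coslice $c\downarrow\CICat$, of $\TpSet$-valued data. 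I would observe that such a limit is assembled from $\TpType_0$-indexed products of $\TpSet$-types and equality types, and therefore lands back in $\TpSet$ \emph{precisely} because $\TpSet$ is impredicative over $\TpType_i$ (and a fortiori over $\TpType_0$) in the base $\ECat$; this is exactly the closure witnessed by the impredicativity rule of \iGDTT{}. The same size analysis, applied to the description of the fiber $\Ol{\TpSet}c=\brk{c\downarrow\CICat,\TpSet}$ as a type of functors, shows by an explicit encoding using only $\Forall{}$, $\Sum{}$, and $\prn{=}$ that $\Ol{\TpSet}$ is classified by $\Ol{\TpType_i}$, whereas without impredicativity the $\TpType_0$-indexed product over the coslice would escape $\TpType_i$. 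Finally, the univalence and proof-irrelevance of $\Ol{\TpProp}$, as well as its closure under equality types, are inherited fiberwise from the corresponding properties of $\TpProp$ in $\ECat$.

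Last I would treat the guarded fragment. Setting $\prn{\Ltr{A}}c=\Ltr\prn{Ac}$ and interpreting the introduction rule and the delayed substitutions pointwise, the closure of every lifted universe under $\Ltr$ and the existence of guarded fixed points follow from the general results of Palombi and Sterling~\cite{palombi-sterling:2023} on the pointwise lifting of guarded recursion from a base model into internal presheaves; I would invoke these results rather than reprove them. To see that the good behaviour of the later modality survives the lifting, I would note that when $\CICat$ has an initial object the global-sections functor of $\brk{\CICat,\ECat}$ factors through evaluation at that object, so that the global points of $\Ltr{\mathbb{N}}$ computed in $\brk{\CICat,\ECat}$ agree with those computed in $\ECat$; if $\ECat$ is globally adequate these are the genuine natural numbers, whence $\brk{\CICat,\ECat}$ is globally adequate as well. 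Assembling these three parts yields both enumerated claims.
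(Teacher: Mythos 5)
Your proposal is correct and follows essentially the same route as the paper: the universes are given by the Hofmann--Streicher lifting $\Ol{\mathbb{X}}c=\brk{c\downarrow\CICat,\mathbb{X}}$, the crux is exactly the observation that impredicativity of $\TpSet$ in $\TpType_i$ absorbs the $\TpType_0$-sized coslice (both for the fibers of $\ForallSymbol$ and for the encoding of $\Ol{\TpSet}$ itself via $\ForallSymbol$, $\Sigma$, and equality), and the guarded fragment is lifted pointwise by appeal to Palombi and Sterling. Your added detail on the limit computation and on global adequacy via evaluation at the initial object is a faithful elaboration of what the paper leaves implicit.
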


\paragraph{Strict extension structures}
In order to construct the store model in \cref{sec:semantics:kripke}, we will
require that the impredicative universe $\TpSet$ also supports a \emph{strict
  extension structure}. While we require this property in only one concrete
setting, it is most natural to define with respect to an arbitrary
\emph{dominance} $\Mor|>->|{\ObjTerm{\ECat}}{\Sigma}\in\ECat$, \ie a universe of
propositions closed under truth and dependent conjunction.

Recall that for any $\phi : \Sigma$, a partial element $A :
\OpMod{\phi}{\TpSet}$ can be extended to a total element $B : \TpSet$ that
restricts to $A$ up to isomorphism \eg{}, by setting $B$ to $\OpMod{\phi}{A}$.
In \cref{sec:semantics:kripke}, however, we require the ability to choose an
extension which agrees exactly with $A$ under $\phi$. We say a universe
$\mathbb{X}$ has a \DefEmph{strict extension structure} when this is possible \ie{},
when it validates following rules:

\begin{mathpar}
  \ebrule[strict extension]{
    \hypo{\phi:\Sigma}
    \hypo{A : \OpMod{\phi}{\mathbb{X}}}
    \infer2{
      \phi_*A : \mathbb{X}
    }
  }
  \and
  \ebrule[strict extension intro]{
    \hypo{\phi:\Sigma}
    \hypo{A:\OpMod{\phi}{\mathbb{X}}}
    \hypo{a:\OpMod{\phi}{A}}
    \infer3{
      \gl{\phi}a : \phi_*A
    }
  }
  \\
  \ebrule{
    \hypo{\phi:\Sigma}
    \hypo{A:\OpMod{\phi}{\mathbb{X}}}
    \hypo{\_ : \phi}
    \infer3{
      \phi_*A \equiv A : \mathbb{X}
    }
  }
  \and
  \ebrule{
    \hypo{\phi:\Sigma}
    \hypo{A:\OpMod{\phi}{\mathbb{X}}}
    \hypo{a:\OpMod{\phi}{A}}
    \hypo{\_:\phi}
    \infer4{
      \gl{\phi}a \equiv a : A
    }
  }
\end{mathpar}

Fortunately, given a dominance $\Sigma$ we are always able to replace a
universe $\mathbb{X}$ with a new universe equipped with an extension structure
without perturbing the collection of types it classifies
(\cref{lem:strictification}). The idea is to replace $\mathbb{X}$ with its
$\Sigma$-partial map classifier.

\begin{defi}[Partial element classifier]
  Let $A$ be a type and let $\Sigma$ be a dominance; the
  \DefEmph{$\Sigma$-partial map classifier} $A^+$ is defined to be the
  dependent sum $\Sum{\phi:\Sigma}{\OpMod{\phi}{A}}$. We have a natural transformation
  $\Mor[\eta]{\ArrId}{\prn{-}^+}$ sending each $a:A$ to the pair
  $\prn{\top,a}:A^+$.
\end{defi}

For any universe $\mathbb{X}$ that classifies each $\phi:\Sigma$, we have an
algebra structure $\Mor[\Pi]{\mathbb{X}^+}{\mathbb{X}}$ for the raw endofunctor
$-^+$, sending each $A:\mathbb{X}^+$ to its dependent product
$\Pi{Q}=\OpMod{\pi_1Q}{\pi_2Q}$. Viewing $\mathbb{X}$ as an internal groupoid,
we see that $\Pi$ is a ``pseudo-algebra'' structure for the \textbf{pointed}
endofunctor $-^+$, as each $\Pi\prn{\top,A}$ is isomorphic to $A$.\footnote{A
similar observation is made by Escard\'o~\cite{escardo:2021} in his
investigation of injective types in univalent mathematics.}  It can be seen
that a strict extension structure for $\mathbb{X}$ is exactly the same thing as
a strictification of $\Pi$, \ie a strict algebra for the pointed endofunctor
$-^+$ that is additionally isomorphic to $\Pi$.\footnote{Algebras for the
pointed endofunctor $-^+$ play an important role in the recent work of Awodey
on Quillen model structures in cubical sets~\cite{awodey:2021:qms}, from whom
we have borrowed some of our notation.}

\begin{lem}[Strictification]\label{lem:strictification}
  Let $\Sigma$ be a dominance such that each $\phi:\Sigma$ is classified by
  $\mathbb{X}$; then there exists a universe $\mathbb{X}^s$ equipped with a
  strict extension structure for $\Sigma$ such that $\mathbb{X}$ and $\mathbb{X}^s$
  are \DefEmph{strongly equivalent} internal categories.
\end{lem}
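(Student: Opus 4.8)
The strategy is to strictify the pseudo-algebra $\Mor[\Pi]{\mathbb{X}^+}{\mathbb{X}}$ in place. As recalled above, a strict extension structure for $\Sigma$ on a universe is exactly a strict algebra for the pointed endofunctor $\prn{-}^+$, and $\Pi$ already furnishes such an algebra up to isomorphism: its unit law holds only up to the canonical iso $\OpMod{\top}{A}\cong A$ rather than definitionally. It therefore suffices to replace $\mathbb{X}$ by a strongly equivalent internal groupoid $\mathbb{X}^s$ on which this pseudo-structure becomes strict. The strength of the equivalence is precisely what guarantees that $\mathbb{X}^s$ classifies exactly the same families as $\mathbb{X}$, so that substituting one for the other perturbs neither the model nor the universe structure established in \cref{lem:igdtt-model-hs-lifting}.

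First I would construct $\mathbb{X}^s$ as a coherent refinement of $\mathbb{X}$, in the style of Awodey's treatment of algebras for $\prn{-}^+$~\cite{awodey:2021:qms} and Escard\'o's analysis of algebraically injective types~\cite{escardo:2021}: a code of $\mathbb{X}^s$ is a code $A:\mathbb{X}$ together with a coherently chosen family of extensions, i.e.\ for each $\phi:\Sigma$ and each $\phi$-partial code refining $A$ a designated code restricting to it on the nose. The dominance axioms --- closure under $\top$ and under dependent conjunction --- are exactly what makes these choices cohere: dependent conjunction lets a two-stage extension problem be contracted to a single one, while closure under $\top$ makes the unit law $\phi=\top\Rightarrow\phi_*A\equiv A$ hold definitionally rather than merely up to isomorphism. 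The projection $\Mor{\mathbb{X}^s}{\mathbb{X}}$ forgetting the chosen extensions, together with the section picking out the canonical choices supplied by $\Pi$, would exhibit the strong equivalence; the extension operator $\phi_*$ on $\mathbb{X}^s$ is then read off from the chosen extensions and satisfies the strict-extension rules displayed above by construction.

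Finally I would transport the universe structure of $\mathbb{X}$ --- closure under dependent sums and products, equality types, the later modality $\Ltr$, and impredicativity --- across the strong equivalence, so that $\mathbb{X}^s$ is again a universe of the kind produced by \cref{lem:igdtt-model-hs-lifting}. The main obstacle is verifying the two strictness conditions simultaneously: that the induced $\phi_*$ restricts to the identity under $\phi$ on the nose rather than only up to isomorphism, and that the forgetful comparison is a strong --- not merely weak --- internal equivalence of universes. This amounts to checking that the chosen extensions are stable under the dependent-conjunction multiplication of $\prn{-}^+$, which is the internal analogue of the coherence theorem for pseudo-algebras over the $2$-monad $\prn{-}^+$; once this coherence is in hand, both the strict algebra law and the equivalence follow formally, and the interderivability with realignment~\cite{gratzer-shulman-sterling:2022:universes} offers an alternative route should a direct verification prove cumbersome.
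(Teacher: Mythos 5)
Your high-level framing is right --- a strict extension structure is a strict algebra for the pointed endofunctor $\prn{-}^+$, and the task is to strictify the pseudo-algebra $\Mor[\Pi]{\mathbb{X}^+}{\mathbb{X}}$ --- but the construction you propose does not deliver it, and the step you defer is the whole lemma. First, the carrier you describe (``a code $A:\mathbb{X}$ together with, for each $\phi:\Sigma$ and each $\phi$-partial code refining $A$, a designated extension'') does not type-check as a definition of a universe with extension structure: the operation $\phi_*$ is a map $\OpMod{\phi}{\mathbb{X}^s}\to\mathbb{X}^s$ on the universe as a whole, taking a partial code to a total one, so it cannot be supplied as data attached to an already-given total code without circularity. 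Second, you reduce both strictness conditions to ``the internal analogue of the coherence theorem for pseudo-algebras over the $2$-monad $\prn{-}^+$'', which you never establish; that coherence statement is not an off-the-shelf fact in this internal setting and is precisely what the lemma is asking for. Third, you locate the source of strictness in closure of $\Sigma$ under $\top$, which is not where the subtlety lies.

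The paper's proof is a short, concrete instance of the strictification strategy: it takes $\mathbb{X}^s$ to be the partial element classifier $\mathbb{X}^+=\Sum{\phi:\Sigma}{\OpMod{\phi}{\mathbb{X}}}$ itself, so that a code simply \emph{is} a partial code. The strict algebra $\Mor{\mathbb{X}^{++}}{\mathbb{X}^+}$ is the multiplication of the partial element classifier monad, \ie{} the dependent conjunction of the dominance: a $\psi$-partial code $\prn{\phi,A}$ is extended to $\prn{\Sum{\_:\psi}{\phi},A}$. Strictness under $\psi$ then requires $\Sum{\_:\psi}{\phi}=\phi$ as elements of $\Sigma$, which holds because $\Sigma$, like any dominance, is univalent (propositional extensionality) --- this, not $\top$-closure, is the key point. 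The strong equivalence $\Mor{\mathbb{X}^+}{\mathbb{X}}$ is just the pseudo-algebra $\Pi$, so no separate transport of universe structure or coherence argument is needed. To repair your write-up, replace the ``coherent refinement'' with $\mathbb{X}^+$ and verify the two displayed strictness rules directly from univalence of $\Sigma$.
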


\begin{proof}
  We set $\mathbb{X}^s$ to be the $\Sigma$-partial element classifier
  $\mathbb{X}^+$ itself. The strict algebra structure
  $\Mor[\alpha]{\mathbb{X}^{++}}{\mathbb{X}^+}$ is given by the multiplication
  operation of the partial element classifier monad, \ie the dependent conjunction of the dominance $\Sigma$.
  The strictness of the algebra structure follows from the fact that $\Sigma$
  is univalent, like any dominance. The equivalence
  $\Mor{\mathbb{X}^+}{\mathbb{X}}$ is given by the pseudo-algebra $\Pi$.
\end{proof}

We therefore combine \cref{lem:igdtt-model-hs-lifting} with
\cref{lem:strictification} to obtain the following theorem.

\begin{thm}\label{thm:igdtt-model-lifting}
  Let $\ECat$ be an elementary topos model of \iGDTT{} and let $\CICat$ be a category internal to $\TpType_0\in\ECat$.
  The category of internal diagrams $\brk{\CICat,\ECat}$ is a model of \iGDTT{} where:
  \begin{enumerate}

  \item the universes $\TpProp,\TpType_i$ are modelled by Hofmann--Streicher
    lifting;

  \item the impredicative universe $\TpSet\in\TpType_i$ is modelled by a
    universe strongly equivalent to the Hofmann--Streicher lifting of
    $\TpSet : \ECat$, but equipped with a strict extension structure for
    $\TpProp$;

  \item the later modality $\Ltr$ is computed pointwise, \ie we have
    $\prn{\Ltr{A}}c=\Ltr\prn{Ac}$.

  \end{enumerate}
\end{thm}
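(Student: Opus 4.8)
The plan is to assemble this result by composing the two preceding lemmas, since the theorem is essentially a packaging of \cref{lem:igdtt-model-hs-lifting} and \cref{lem:strictification} applied to a well-chosen dominance. First I would invoke \cref{lem:igdtt-model-hs-lifting} to obtain that $\brk{\CICat,\ECat}$ is already a model of \iGDTT{} in which every universe—$\TpProp$, $\TpSet$, and each $\TpType_i$—is given by Hofmann--Streicher lifting, and in which $\Ltr$ is computed pointwise. This immediately discharges clauses (1) and (3) of the statement, since those describe exactly the universes and later modality produced by that lemma; nothing further is needed for the predicative universes $\TpType_i$ and for $\TpProp$, nor for the pointwise formula $\prn{\Ltr{A}}c = \Ltr\prn{Ac}$.

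The substantive content is clause (2), where we must upgrade the Hofmann--Streicher lifting of $\TpSet$ to a strongly equivalent universe carrying a strict extension structure \emph{for the dominance $\TpProp$}. To do this I would first check that $\TpProp$ (in the internal-diagram model) is a dominance in the sense required: it is a universe of propositions closed under truth and dependent conjunction, which follows from the fact that the lifted $\TpProp$ inherits proof-irrelevance, univalence, and closure under the relevant connectives from the base model per \cref{lem:igdtt-model-hs-lifting}. Next I would verify the hypothesis of \cref{lem:strictification}, namely that each $\phi:\TpProp$ is classified by $\TpSet$; this is precisely the inclusion $\prn{\TpProp\subseteq\TpSet}$ that is part of the \iGDTT{} universe structure, preserved by the lifting. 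With these hypotheses in place, \cref{lem:strictification} yields a universe $\TpSet^s$, strongly equivalent to $\TpSet$ as internal categories, equipped with a strict extension structure for $\TpProp$. Taking $\TpSet^s$ as the interpretation of the impredicative universe gives clause (2).

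The one point requiring care—and the main obstacle—is confirming that replacing $\TpSet$ by $\TpSet^s$ does not disturb the rest of the model structure established in \cref{lem:igdtt-model-hs-lifting}. In particular I must argue that impredicativity of $\TpSet$ over $\TpType_i$ survives the strictification, together with closure under the guarded-recursive operations and the later modality. Since \cref{lem:strictification} guarantees only a \emph{strong equivalence} of internal categories, the strategy is to transport all the relevant structure along this equivalence: impredicativity, being the existence of universal types satisfying a universal property, is invariant under strong equivalence of the classifying universe, and the pointwise later modality likewise transports because $\Ltr$ commutes with the equivalence up to the canonical isomorphism. Thus the collection of types classified by $\TpSet^s$ coincides (up to the equivalence) with that of $\TpSet$, so every closure property verified for $\TpSet$ in \cref{lem:igdtt-model-hs-lifting} descends to $\TpSet^s$. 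Assembling these observations completes the proof.
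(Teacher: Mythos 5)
Your proposal is correct and follows essentially the same route as the paper, which obtains the theorem simply by combining \cref{lem:igdtt-model-hs-lifting} with \cref{lem:strictification} applied to the dominance $\TpProp$. Your additional care in checking the hypotheses of the strictification lemma and in transporting the impredicativity and guarded structure along the strong equivalence only makes explicit what the paper leaves implicit.
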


We will use this result twice: first to extend models of \iGDTT{} with
constructs for relational reasoning and parametricity
(\cref{sec:semantics:lrat}), and second to extend models of \iGDTT{} with
higher-order store (\cref{sec:semantics:kripke}). The culmination of this
process is a standard model for \iGDTTRefLRAT{} that modularly combines
realizability and presheaves.

\NewDocumentCommand\SPAN{m}{\mathbf{Span}\,#1}

\subsection{Logical relations as types in the span model of \texorpdfstring{\iGDTT}{iGDTT}}\label{sec:semantics:lrat}

Prior to the modeling \iGDTTRefLRAT, we expose the construction of the
\Alert{span model} of \iGDTT{} on top of a base model. This
intermediate model does not possess the necessary structure to interpret
references, but it does support the relational primitives of \iGDTTRefLRAT{}. We
will eventually use \iGDTT{} itself as an internal language within this model in
\cref{sec:semantics:kripke} to build the store model.

\subsubsection{The span model} Given a topos model $\ECat$ of \iGDTT{} we may
take the topos $\SPAN{\ECat}$ of spans in $\ECat$, which can be constructed as a
category of diagrams $\brk{\mathbb{C},\ECat}$ where $\mathbb{C}$ is the generic
span $\brc{\mathsf{L} \leftarrow \mathsf{C} \rightarrow \mathsf{R}}$ viewed as
an internal poset in $\TpSet\in\ECat$. When $\ECat$ is the category of
presheaves on a well-founded poset $\mathbb{O}$ computed in a realizability
topos such as $\Kwd{Eff}$ as in \cref{sec:semantics:base-model}, we obtain from
\cref{thm:igdtt-model-lifting} a new model of \iGDTT{} in $\SPAN{\ECat}$ in
which all guarded constructs are computed pointwise.
The span model $\SPAN{\ECat}$ contains propositions $\OpnL,\OpnR$ given by the
representables
$\Yo{\OpCat{\mathbb{C}}}{\mathsf{L}},\Yo{\OpCat{\mathbb{C}}}{\mathsf{R}}$
respectively. The span model is moreover closed under the \emph{refinement}
type connective $\brk{\Opn\hookrightarrow x:A\mid Bx}$ where
$\Opn=\OpnL+\OpnR$ is the (disjoint) union as usual; this follows as
the subterminals $\OpnL,\OpnR$ have pointwise decidable image in
$\ECat$~\cite{orton-pitts:2016}.%

\subsubsection{Modal geometry of the span model}

We may further unravel the open and closed modalities of $\SPAN{\ECat}$ to
provide additional computational and geometric intuition. First we comment that
the propositions $\OpnL,\OpnR,\Opn$ are concretely realized by the spans
$\brc{\ObjTerm{\ECat}\leftarrow \ObjInit{\ECat} \rightarrow \ObjInit{\ECat}}$,
$\brc{\ObjInit{\ECat}\leftarrow \ObjInit{\ECat} \rightarrow \ObjTerm{\ECat}}$,
and $\brc{\ObjTerm{\ECat}\leftarrow \ObjInit{\ECat} \rightarrow
\ObjTerm{\ECat}}$ respectively.

\paragraph{Characterization of open modalities}

For any subterminal object $\Mor|>->|{\phi}{\ObjTerm{\SPAN{\ECat}}}$, the subcategory of $\SPAN{\ECat}$ spanned by $\phi$-transparent
objects can be expressed as the slice $\SPAN{\ECat}\downarrow \phi$. But in the
case of the specific propositions that we have distinguished, more explicit
computations are available:

\begin{enumerate}

  \item A $\Opn$-transparent object is one that is isomorphic to a product span
    $\brc{X\SubL\leftarrow X\SubL\times X\SubR \rightarrow X\SubR}$.

  \item A $\OpnL$-transparent object is one that is isomorphic to a digram of
    the form $\brc{X\SubL\leftarrow X\SubL\rightarrow \ObjTerm{\ECat}}$.

  \item A $\OpnR$-transparent object is one that is isomorphic to a digram of
    the form $\brc{\ObjTerm{\ECat}\leftarrow X\SubR\rightarrow X\SubR}$.
\end{enumerate}

Thus we compute the open modalities explicitly:
\begin{enumerate}

  \item The open modality $\prn{\OpMod{\Opn}{-}}$ takes a span
    $\brc{X\SubL\leftarrow \tilde{X}\rightarrow X\SubR}$
    to the span $\brc{X\SubL\leftarrow X\SubL\times X\SubR\rightarrow X\SubR}$.

  \item The open modality $\prn{\OpMod{\OpnL}{-}}$ takes a span
    $\brc{X\SubL\leftarrow \tilde{X}\rightarrow X\SubR}$
    to the span $\brc{X\SubL\leftarrow X\SubL\rightarrow \ObjTerm{\ECat}}$.

  \item The open modality $\prn{\OpMod{\OpnR}{-}}$ takes a span
    $\brc{X\SubL\leftarrow \tilde{X}\rightarrow X\SubR}$
    to the span $\brc{\ObjTerm{\ECat}\leftarrow X\SubR\rightarrow X\SubR}$.
\end{enumerate}

Based on the above, it is easy to see (\eg) that $\OpMod{\Opn}{A}$ is
isomorphic to $\prn{\OpMod{\OpnL}{A}}\times \prn{\OpMod{\OpnR}{A}}$.

\paragraph{Characterization of closed modality}

Here we give an analogous discussion of the closed modality.  A $\Opn$-sealed
object is one that is isomorphic to a diagram of the form
$\brc{\ObjTerm{\ECat}\leftarrow \tilde{X}\rightarrow\ObjTerm{\ECat}}$.  Then
the closed modality $\prn{\ClMod{\Opn}{-}}$ takes a span $\brc{X\SubL\leftarrow
\tilde{X}\rightarrow X\SubR}$ to the span
$\brc{\ObjTerm{\ECat}\leftarrow\tilde{X}\rightarrow \ObjTerm{\ECat}}$.

\subsubsection{Synthetic weak bisimulation}\label{sec:synthetic-weak-bisimulation}

In the span model of \iGDTT{}, we may define a new type connective $\TpBisim* :
\Compr{\TpSet \to \TpSet}{\Opn\hookrightarrow \TpL*}$ that expresses a weak bisimulation between two computations;
in other words, we will have $\OpMod{\Opn}{\TpBisim{A} =
\TpL{A}}$ and hence both $\OpMod{\OpnL}{\TpBisim{A} = \TpL{A}}$
and $\OpMod{\OpnR}{\TpBisim{A} = \TpL{A}}$. Given
$u\SubL:\OpMod{\OpnL}{\TpL{A}}$ and $u\SubR:\OpMod{\OpnR}{\TpL{A}}$, an
element of $\Compr{\TpBisim{A}}{\OpnL\hookrightarrow
u\SubL,\OpnR\hookrightarrow u\SubR}$ will be a proof that $u\SubL$ and $u\SubR$
are weakly bisimilar. Because $A$ itself can be any type (\eg any synthetic
correspondence), our definition of $\TpBisim{A}$ can be seen to be an
adaptation of the lifting of a correspondence from
M\o{}gelberg and Paviotti~\cite{mogelberg-paviotti:2016}.
We define $\TpBisim{A}$ by solving a guarded recursive domain equation
in $\TpSet$:

\iblock{
  \mrow{
    \TpBisim{A} \eqdef \brk{\Opn\hookrightarrow u:\TpL{A} \mid \Con{WeaklyBisimilar}_A u}
  }

  \row

  \mhang{
    \Kwd{data}\ \Con{Done}_A : \prn{\OpMod{\Opn}{\TpL{A}}}\to \TpSet\ \Kwd{where}
  }{
    \mrow{
      \Con{stop} : \prn{a:\OpMod{\Opn}{A}}\to \Con{Done}_A\prn{\eta a}
    }
    \mrow{
      \Con{waitR} : \prn{a:\OpMod{\Opn}{A}, n : \mathbb{N}\Sub{\geq1}} \to \Con{Done}_A\,\brk{\OpnL\hookrightarrow \eta a, \OpnR\hookrightarrow\delta^n\eta a}
    }
    \mrow{
      \Con{waitL} : \prn{a:\OpMod{\Opn}{A}, n : \mathbb{N}\Sub{\geq1}} \to \Con{Done}_A\,\brk{\OpnL\hookrightarrow \delta^n \eta a, \OpnR\hookrightarrow \eta a}
    }
  }

  \row

  \mhang{
    \Kwd{quotient}\ \Kwd{data}\ \Con{WeaklyBisimilar}_A : \prn{\OpMod{\Opn}{\TpL{A}}} \to \TpSet\ \Kwd{where}
  }{
    \mrow{
      \Con{done} : \brc{u:\OpMod{\Opn}{\TpL{A}}} \to \Con{Done}_Au \to \Con{WeaklyBisimilar}_Au
    }
    \mrow{
      \Con{step} : \prn{u:\Ltr\TpBisim{A}} \to \Con{WeaklyBisimilar}_A\,\prn{\vartheta\Sub{\TpL{A}}\,u}
    }
    \mrow{
      \star : \brc{\_:\Opn, u:\TpL{A}}\to \Con{WeaklyBisimilar}_A u
    }
    \mrow{
      \_ : \brc{\_:\Opn,u:\TpL{A},p:\Con{WeaklyBisimilar}_A u}\to p = \star
    }
  }
}

We have ensured that $\Con{WeaklyBisimilar}_A$ is valued in $\Opn$-sealed types
by defining it as a \emph{quotient inductive} definition; here again we recall \cref{rem:effectivity}. Thus the use of the
refinement type in $\TpBisim{A}$ is well-defined.
We note that $\TpBisim{A}$ enjoys a similar interface to that of
$\TpL{A}$; in particular, it inherits the structure of a guarded domain from
its components, meaning that it supports recursion:

\iblock{
  \mrow{
    {\Opn}\hookrightarrow{\vartheta\Sub{\TpBisim{A}} u \eqdef \vartheta\Sub{\TpL{A}}\,u}
  }
  \mrow{
    \underline{\vartheta\Sub{\TpBisim{A}} u} \eqdef
    \Con{step}\,\prn{
      \Next*\brk{x\leftarrow u}.\, x
    }
  }
}

Likewise, we may define a unit map $\widetilde{\eta} : A \to \TpBisim{A}$ as follows:

\iblock{
  \mrow{
    {\Opn}\hookrightarrow{\widetilde{\eta}\,a \eqdef \eta\,a}
  }
  \mrow{
    \underline{\widetilde{\eta}\,a} \eqdef \Con{done}\,\prn{\Con{stop}\,a}
  }
}

In order to define the Kleisli extension for $\TpBisim{A}$ we must
first define an operation that adds a step on the left or on the right; we
define the left-handed stepper $\delta\SubL$ below, and treat $\delta\SubR$ symmetrically.

\iblock{
  \mrow{
    \delta\SubL : \TpBisim{A} \to \TpBisim{A}
  }
  \mrow{
    {\Opn}\hookrightarrow{\delta\SubL u \eqdef \brk{\OpnL\hookrightarrow \delta u, \OpnR\hookrightarrow u}}
  }
  \mhang{
    \underline{\delta\SubL u} \eqdef
  }{
    \mhang{
      \Kwd{case}\,\underline{u}\,\Kwd{of}
    }{
      \mrow{
        \Con{done}\,\prn{\Con{stop}\,a} \hookrightarrow \Con{done}\,\prn{\Con{waitL}\,1\,a}
      }
      \mrow{
        \Con{done}\,\prn{\Con{waitL}\,a\,n} \hookrightarrow \Con{done}\,\prn{\Con{waitL}\,\prn{n+1}\,a}
      }
      \mrow{
        \Con{done}\,\prn{\Con{waitR}\,a\,n} \hookrightarrow \Con{step}\,\prn{\Next\prn{\Con{waitR}_?\,\prn{n-1}\,a}}
      }
      \mrow{
        \Con{step}\,u' \hookrightarrow \Con{step}\,\prn{\Next*\brk{x\leftarrow u'}.\, \delta\SubL x}
      }
      \mrow{
        \star \hookrightarrow \brk{\OpnL\hookrightarrow \delta u, \OpnR\hookrightarrow u}
      }
    }
  }

  \row

  \mrow{
    \Con{waitR}_? : \prn{a:\OpMod{\Opn}{A}}\,\prn{n : \mathbb{N}} \to \Con{Done}_A\,\brk{\OpnL\hookrightarrow \eta a, \OpnR\hookrightarrow\delta^n\eta a}
  }
  \mrow{
    \Con{waitR}_?\,a\,0 \eqdef \Con{stop}\,a
  }
  \mrow{
    \Con{waitR}_?\,a\,\prn{n\geq 1} \eqdef \Con{waitR}\,a\,n
  }
}

To see that the case analysis in $\underline{\delta\SubL u}$ is well-defined as
a map out of the quotient, it suffices to observe that all clauses return
$\delta u$ under $\OpnL$ and $u$ under $\OpnR$.
Next we define $\TmStep\SubL,\TmStep\SubR$ to be the generic effects
$\delta\SubL\prn{\TmRet\TmUnit}$ and $\delta\SubR\prn{\TmRet\TmUnit}$
respectively.
We finally define the Kleisli extension for $\TpBisim*$ as well, extending the
Kleisli extension for $\TpL*$:

\iblock{
  \mrow{
    {\Opn}\hookrightarrow{\widetilde{\Con{bind}}\,u\,f \eqdef \Con{bind}\,u\,f}
  }
  \mhang{
    \underline{\widetilde{\Con{bind}}\,u\,f} \eqdef
  }{
    \mhang{
      \Kwd{case}\,\underline{u}\,\Kwd{of}
    }{
      \mrow{
        \Con{done}\,d \hookrightarrow \Con{finalize}\, d\, f
      }
      \mrow{
        \Con{step}\,u' \hookrightarrow
        \Con{step}\,\prn{
          \Next*\brk{x\leftarrow u'}.\,
          \widetilde{\Con{bind}}\, x\, f
        }
      }
      \mrow{
        \star \hookrightarrow \Con{bind}\,u\,f
      }
    }
  }

  \row

  \mrow{
    \Con{finalize}\,\prn{\Con{stop}\,a}\, f \eqdef f a
  }
  \mrow{
    \Con{finalize}\,\prn{\Con{waitL}\,n\,a}\, f \eqdef \delta\SubL^n\, \prn{f a}
  }
  \mrow{
    \Con{finalize}\,\prn{\Con{waitR}\,n\,a}\, f \eqdef \delta\SubR^n\, \prn{f a}
  }
}

\subsection{Higher-order store in the possible worlds model of \texorpdfstring{\iGDTTRefLRAT}{iGDTT/ref/lrat}}\label{sec:semantics:kripke}

We recall the construction of the category $\WICat$ internal to $\TpType_0$ as
well as the objects of heaps $\HICat_w$ from \cref{sec:worlds-and-heaps}. In
this section, we apply \cref{thm:igdtt-model-lifting} to
$\brk{\WICat,\SPAN{\ECat}}$ to construct a dependently typed version of the
store model. In our previous model, the collection of types was global and did
not depend on Kripke worlds; in the new version, the types themselves depend on
heap shapes.  We define the reference type as follows, writing $B\Sub{\vert w'}
: \brk{w'\downarrow\WICat,\TpSet}$ for the obvious restriction of
$B:\brk{\WICat,\TpSet}$ to the coslice:

\iblock{
  \mrow{
    A:\Ol{\TpSet} \vdash \TpRef{A} : \Ol{\TpSet}
  }
  \mrow{
    \TpRef*_w\prn{A : \brk{w\downarrow\WICat,\TpSet}}\, \prn{w'\geq w} \eqdef
    \Compr{
      i\in \vrt{w'}
    }{
      \Ltr\brk{B\leftarrow w'i}.
      B\Sub{\vert w'} = A\Sub{\vert w'}
    }
  }
}

We define the internal state monad and its $\Ltr$-algebra structure as follows,
using the weak bisimulation monad $\TpBisim*$ pointwise:

\iblock{
  \mrow{
    A:\Ol{\TpSet}\vdash \TpT{A} : \Ol{\TpSet}
  }
  \mrow{
    \TpT*_wA w_1 \eqdef
    \Forall{w_2\geq w_1}
    \HICat\Sub{w_2}
    \to
    \Alert{\TpBisim}
    \Exists{w_3\geq w_2}
    \HICat\Sub{w_3}\times
    A w_3
  }

  \row

  \mrow{
    A : \Ol{\TpSet} \vdash \vartheta\Sub{\TpT{A}} : \Ltr\TpT{A}\to\TpT{A}
  }
  \mrow{
    \vartheta\Sub{w,\TpT*_wA} m w' h \eqdef
    \vartheta\,\prn{
      \Next*\brk{u\leftarrow m}.
      u w' h
    }
  }

  \row

  \mrow{
    A:\Ol{\TpSet} \vdash \Con{ret}_A : A\to \TpT{A}
  }
  \mrow{
    \Con{ret}\Sub{w,A} a w' h \eqdef
    \eta\,\prn{
      \Con{pack}\,\prn{
        w',\prn{h,a\Sub{\vert{w_2}}}
      }
    }
  }

  \row

  \mrow{
    A,B:\Ol{\TpSet}\vdash \Con{bind}\Sub{A,B} : \TpT{A}\times\prn{A\to\TpT{B}}\to \TpT{B}
  }
  \mrow{
    \Con{bind}\Sub{w,A,B}\prn{m,k} w' h =
    \Con{pack}\,\prn{w'',\prn{h',a}} \leftarrow m w h;
    k_{w''} a w'' h'
  }
}

The steppers $\TmStep,\TmStep\SubL,\TmStep\SubR : \TpT{\TpUnit}$
are defined \emph{pointwise} using the corresponding constructs of
$\TpBisim*$.
We define the generic effects for the getter and setter below:

\iblock{
  \mrow{
    A:\Ol{\TpSet}\vdash \Con{get}_A : \TpRef{A}\to \TpT{A}
  }
  \mrow{
    \Con{get}\Sub{w,A}l w' h \eqdef
    \vartheta\,\prn{
      \Next*\brk{z\leftarrow hl}.\,
      \eta\,\prn{\Con{pack}\,\prn{w',\prn{h,z}}}
    }
  }

  \row

  \mrow{
    A:\Ol{\TpSet}\vdash \Con{set}_A : \TpRef{A}\times A\to\TpT{\TpUnit}
  }
  \mrow{
    \Con{set}\Sub{w,A}\prn{l,x}w' h \eqdef
    \eta\,\prn{
      \Con{pack}\,\prn{
        w',
        \prn{h\brk{l\mapsto x\Sub{\vert w'}},*}
      }
    }
  }
}

The allocator is the only subtle part; it is here that we will need to use the
strict extension structure that we have assumed on $\TpSet$ in $\ECat$.

\iblock{
  \mrow{
    A:\Ol{\TpSet}\vdash \Con{new}_A : A\to \TpT\prn{\TpRef{A}}
  }

  \mhang{
    \Con{new}\Sub{w,A}x w' h \eqdef
  }{
    \mrow{
      \Kwd{let}\ i = \Con{fresh}\,\vrt{w'};
    }
    \mrow{
      \Kwd{let}\ w'' = w' \cup\brc{i\mapsto \Next \Alert{\prn{\lambda w_0. \prn{w_0\geq w'}_*\prn{A w_0}}}};
    }
    \mrow{
      \Kwd{let}\ h' = h\Sub{\vert w''}\cup \brc{i\mapsto \Next{x\Sub{\vert{}w''}}};
    }
    \mrow{
      \eta\,\prn{
        \Con{pack}\,\prn{
          w'',
          \prn{
            i,
            h'
          }
        }
      }
    }
  }
}

The problem solved by strict extension above is the following: a type
$A : \Ol{\TpSet}$ at world $w$ has extent only on the coslice
$w\downarrow\WICat$, and yet to extend the world by a new cell we
must provide a type that has extent on all of $\WICat$. We use strict extension
to extend $A$ to a global type in a way that agrees \emph{strictly} with $A$
after $w'\geq w$.
We synthesize the preceding discussion into the following result:
\begin{thm}
  Let $\ECat$ be an elementary topos model of $\iGDTT$. The lifting of this
  model to $\brk{\WICat,\SPAN{\ECat}}$ extends to a model of \iGDTTRefLRAT{}.
\end{thm}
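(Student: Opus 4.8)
The plan is to assemble the components developed throughout this section into a single model and to check that each rule of \iGDTTRefLRAT{} is validated. I would begin by observing that $\brk{\WICat,\SPAN{\ECat}}$ is itself an instance of the Hofmann–Streicher lifting of \cref{thm:igdtt-model-lifting}: the span model $\SPAN{\ECat}$ is a model of \iGDTT{} by the construction of \cref{sec:semantics:lrat}, and $\WICat$, defined as in \cref{sec:worlds-and-heaps}, is interpreted there as a category internal to $\TpType_0$. Hence $\brk{\WICat,\SPAN{\ECat}}$ is already a model of \iGDTT{} in which $\TpProp$ and $\TpType_i$ arise by Hofmann–Streicher lifting, the impredicative universe $\TpSet$ is strongly equivalent to the lifting but carries a strict extension structure for $\TpProp$, and $\Ltr$ is computed pointwise. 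The relational apparatus of LRAT descends from $\SPAN{\ECat}$: the propositions $\OpnL,\OpnR,\Opn$ are the constant diagrams on the corresponding subterminals of the span model, and since the constant-diagram functor is simultaneously a left and a right adjoint it preserves both the pullback witnessing disjointness of $\OpnL,\OpnR$ and the coproduct $\Opn=\OpnL+\OpnR$, so the case-splitting rules remain valid. The refinement type connective for $\Opn$ is then available precisely because the strict extension structure on $\TpSet$ provides the required realignment along $\Opn:\TpProp$.

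Next I would check that the explicit store constructions of \cref{sec:semantics:kripke} satisfy the rules of \iGDTTRef{}. The reference type $\TpRef*$, the monad $\TpT*$ with its $\Ltr$-algebra $\vartheta\Sub{\TpT{A}}$, and the operations $\Con{ret},\Con{bind},\Con{get},\Con{set},\Con{new}$ are defined pointwise in the world index; the monad and strength laws follow as in \cref{thm:t-monad}, now built over the weak-bisimulation monad $\TpBisim*$ rather than the bare lift monad $\TpL*$. Because $\OpMod{\Opn}{\TpBisim{A}=\TpL{A}}$, the entire construction restricts under each of $\OpnL,\OpnR$ to the simpler monad of \cref{sec:monad}, which both organizes the verification and exhibits $\TpT{A}$ as a synthetic correspondence between its two projections. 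Naturality of every operation in the world must be confirmed, and---as already in \cref{sec:easy}---this does not depend on the behaviour of $\Con{fresh}$; the $\Ltr$-algebra equation relating $\delta\Sub{\TpT{B}}$ to $\Con{bind}$ holds because $\vartheta\Sub{\TpT{A}}$ factors through $\vartheta$ on $\TpBisim*$, which commutes with Kleisli extension.

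It then remains to verify the equations. The four state equations hold only up to the abstract step, exactly as computed in \cref{sec:easy}: each read applies $\vartheta$ once and hence contributes one $\delta$, which is precisely the $\TmStep$ appearing on the right-hand sides. The LRAT typing rules for $\TmStep\SubL,\TmStep\SubR$ follow because these are taken pointwise from $\delta\SubL,\delta\SubR$ of the span model, whose defining property is to restrict to $\delta$ on one side and the identity on the other. I expect the main obstacle to be the allocator $\Con{new}$, the one genuinely subtle construction. A type $A:\Ol{\TpSet}$ at a world $w$ has extent only over the coslice $w\downarrow\WICat$, yet extending the world with a fresh cell demands a \emph{global} semantic type defined on all of $\WICat$; this is where the strict extension structure is indispensable. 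Setting the new cell's type to $\lambda w_0.\prn{w_0\geq w'}_*\prn{A\,w_0}$ yields a globally defined type that agrees \emph{strictly} with $A$ on $w'\geq w$, so that the subsequent $\Con{get}/\Con{set}$ steps and the reference-type equation typecheck on the nose rather than merely up to isomorphism. Confirming this strict agreement validates the allocation equation and the well-definedness of the operations as maps out of the relevant quotients, which completes the proof.
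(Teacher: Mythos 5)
Your proposal is correct and follows essentially the same route as the paper's own (much terser) proof: apply \cref{thm:igdtt-model-lifting} to obtain a model of \iGDTT{}, lift the relational constructs $\OpnL,\OpnR,\Opn$ pointwise from $\SPAN{\ECat}$, and interpret the state constructs via the explicit definitions of $\Con{T}$, $\Con{new}$, $\Con{get}$, and $\Con{set}$, with the strict extension structure doing the work in the allocator. The additional verification detail you supply (monad laws via $\TpBisim*$, restriction under $\OpnL,\OpnR$ to the simple monad, strict agreement of the extended type) is consistent with the surrounding discussion in the paper and fills in what its proof leaves implicit.
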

\begin{proof}
  Applying \cref{thm:igdtt-model-lifting} we obtain a model of $\iGDTT{}$. The
  relational constructs of \iGDTTRefLRAT{} ($\OpnL$, $\OpnR$, \etc{}) are lifted
  pointwise from $\SPAN{\ECat}$. The remaining constructs---those governing
  state---are interpreted using the above definitions of $\Con{T}$, $\Con{new}$,
  $\Con{get}$, and $\Con{set}$.
\end{proof}

Recalling that \iGDTTRef{} embeds into \iGDTTRefLRAT{}, we obtain the following
by the same reasoning as \cref{cor:igdtt-consistent}:
\begin{cor}
  Both \iGDTTRef{} and \iGDTTRefLRAT{} are consistent.
\end{cor}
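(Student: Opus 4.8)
The plan is to assemble the model in three layers, mirroring the structure of the statement. First I would invoke \cref{thm:igdtt-model-lifting} with internal category $\CICat \eqdef \WICat$ over the base topos $\SPAN{\ECat}$, thereby equipping $\brk{\WICat,\SPAN{\ECat}}$ with the structure of an \iGDTT{} model. This step requires that $\WICat$ be a category internal to $\TpType_0$ in $\SPAN{\ECat}$; this holds because the guarded recursive fixed point that defines $\WICat$ and $\mathcal{T}$ in \cref{sec:worlds-and-heaps} can be solved in \emph{any} model of \iGDTT{}, and $\SPAN{\ECat}$ is such a model. Crucially, the version of \cref{thm:igdtt-model-lifting} we apply supplies a universe $\Ol{\TpSet}$ carrying a strict extension structure for $\TpProp$, which will be indispensable for interpreting allocation.

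Second, I would account for the relational primitives of \iGDTTRefLRAT{}. The span model $\SPAN{\ECat}$ constructed in \cref{sec:semantics:lrat} already carries the propositions $\OpnL,\OpnR,\Opn$, their open and closed modalities, the refinement type connective, and---defined internally there---the weak bisimulation monad $\TpBisim*$. Since the Hofmann--Streicher lifting computes all of these \emph{pointwise} over $\WICat$, each such construct, together with the disjointness rule, the type-level case split, and the steppers $\TmStep\SubL,\TmStep\SubR$, transfers directly: each rule holds in the lifted model because it holds fiberwise in $\SPAN{\ECat}$.

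The remaining and most delicate work is the state layer. Following the template of \cref{thm:t-monad}, I would verify that $\TpT*$---now defined using $\TpBisim*$ in place of $\TpL*$---is a strong monad whose every fiber $\prn{\TpT{A}}w$ is a guarded domain via the displayed $\vartheta\Sub{\TpT{A}}$, that $\TpRef*$ satisfies its formation rule, and that $\Con{get},\Con{set},\Con{new}$ together with $\Con{ret}$ and $\Con{bind}$ assemble into natural transformations validating the four state equations up to abstract steps, all by direct unfolding of the definitions. The hard part will be $\Con{new}$: a semantic type $A:\Ol{\TpSet}$ at a world $w$ has extent only over the coslice $w\downarrow\WICat$, yet allocating a fresh cell forces us to record a type defined on all of $\WICat$. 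This is exactly the mismatch that the strict extension structure resolves---setting the new cell's type to $\lambda w_0.\,\prn{w_0\geq w'}_*\prn{A w_0}$ yields a global type agreeing \emph{strictly} with $A$ on every $w'\geq w$, so the laws governing $\Con{new}$ hold on the nose rather than merely up to isomorphism, and the world-indexing of the monad remains natural. Finally, checking that $\TmStep\SubL$ restricts to $\TmStep$ under $\OpnL$ and to $\TmRet\TmUnit$ under $\OpnR$ (and symmetrically for $\TmStep\SubR$) reduces to the corresponding restriction property already established for $\delta\SubL,\delta\SubR$ on $\TpBisim*$ in \cref{sec:synthetic-weak-bisimulation}, completing the interpretation.
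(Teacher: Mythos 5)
Your three-layer construction is essentially a (correct) re-derivation of the theorem immediately preceding this corollary --- the one asserting that the lifting of $\ECat$ to $\brk{\WICat,\SPAN{\ECat}}$ extends to a model of \iGDTTRefLRAT{} --- and it matches the paper's approach to that theorem: Hofmann--Streicher lifting over $\WICat$ with a strictified $\Ol{\TpSet}$, pointwise transfer of the relational primitives and of $\TpBisim*$ from $\SPAN{\ECat}$, and the state layer with $\Con{new}$ handled via the strict extension structure. So as an account of \emph{why a model exists}, there is nothing wrong here.

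However, the statement you are asked to prove is \emph{consistency}, and your argument stops at ``completing the interpretation.'' The existence of an interpretation does not by itself yield consistency: a degenerate model (one in which the initial object is inhabited, so that every type has a point) also interprets all the rules. The paper's proof of this corollary is deliberately short precisely because the model construction is already packaged in the preceding theorem; what remains, and what the paper supplies by saying ``by the same reasoning as \cref{cor:igdtt-consistent},'' is the non-triviality argument: the base topos $\brk{\OpCat{\omega},\Kwd{Eff}}$ is non-trivial because $\omega$ is non-empty, and this non-triviality is preserved through the span construction and the lifting over $\WICat$, so the interpretation of $\bot$ remains uninhabited and the theory cannot derive a contradiction. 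You should also make explicit the final step for \iGDTTRef{}: it embeds into \iGDTTRefLRAT{} (every model of the latter restricts to one of the former), so its consistency follows from that of \iGDTTRefLRAT{}. Your proposal gestures at this implicitly by building the full \iGDTTRefLRAT{} model, but neither the non-triviality step nor the restriction argument appears, and without the former the conclusion does not follow.
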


Fix a closed term $M : \TpT{\mathbb{N}}$. Inspecting the interpretation of $M$
within $\brk{\WICat,\SPAN{\ECat}}$, we see that it corresponds to a global
element of $\Mor[\bbrk{M}]{1}{\TpT{\mathbb{N}}}$ in
$\brk{\WICat,\SPAN{\ECat}}$. As $\WICat$ possesses an initial object (the empty heap), $\bbrk{M}$ is determined by its instantiation
$\Mor[\bbrk{M}\Sub{\emptyset}]{1}{\Con{T}\Sub{\emptyset} \mathbb{N}}$.

Given any semantic world $w$ and heap $h : \HICat\Sub{w}$, we therefore may run
$t\Sub{\emptyset}$ to obtain an element $\TpBisim \mathbb{N}$ within
$\SPAN{\ECat}$. We present this using \iGDTT{} as the internal language of
$\SPAN{\ECat}$:

\iblock{
  \mrow{
    \Con{run} : \prn{w : \WICat} \to \HICat\Sub{w} \to \TpBisim \mathbb{N}
  }
  \mrow{
    \Con{run}_w\ h\ t = \prn{w', h', n} \leftarrow t\Sub{\emptyset}\ w\ h; \eta\prn{n}
  }
}

Denoting the restriction of $\bbrk{M}$ by $\OpnL$ (resp. $\OpnR$) by
$\bbrk{M}\SubL$ (resp. $\bbrk{M}\SubR$), the preceding discussion now
substantiates the claim made in \cref{sec:case-study:local-references}:

\begin{thm}
  Fix a closed term $M : \TpT{\TpNat}$ in \iGDTTRefLRAT{}. For any $w : \WICat$
  and a heap $h : \HICat\Sub{w}$, there is an element of
  $\Con{WeaklyBisimilar}\Sub{\mathbb{N}}{\brk{\OpnL\hookrightarrow \Con{run}_w\
      h\ \bbrk{M}\SubL, \OpnR\hookrightarrow \Con{run}_w\ h\ \bbrk{M}\SubR}}$ \ie{} a
  weak bisimulation between the results of executing $\bbrk{M}\SubL$ and
  $\bbrk{M}\SubR$.
\end{thm}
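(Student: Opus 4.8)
The plan is to observe that the type $\TpBisim{\mathbb{N}}$ is, by its very definition as the refinement type $\brk{\Opn\hookrightarrow u:\TpL{\mathbb{N}}\mid\Con{WeaklyBisimilar}\Sub{\mathbb{N}} u}$, a type whose elements package a partial computation $u:\OpMod{\Opn}{\TpL{\mathbb{N}}}$ \emph{together with} a proof that its two sides are weakly bisimilar. Thus it suffices to produce a single element of $\TpBisim{\mathbb{N}}$ whose $\Opn$-restriction is the glued $\Opn$-partial element $\brk{\OpnL\hookrightarrow\Con{run}_w\,h\,\bbrk{M}\SubL,\OpnR\hookrightarrow\Con{run}_w\,h\,\bbrk{M}\SubR}$, and then to extract its underlying witness by the refinement type's elimination rule $\underline{(-)}$.

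Concretely, I would first interpret the closed term $M$ in $\brk{\WICat,\SPAN{\ECat}}$ as a global element $\Mor[\bbrk{M}]{1}{\TpT{\mathbb{N}}}$; because $\WICat$ carries an initial object (the empty heap), this element is determined by its instantiation $\bbrk{M}\Sub{\emptyset}$, exactly as remarked just before the statement. Feeding $\bbrk{M}$ together with the given world $w$ and heap $h:\HICat\Sub{w}$ into the internally defined operator $\Con{run}$ yields an honest element $\Con{run}_w\,h\,\bbrk{M}:\TpBisim{\mathbb{N}}$ of $\SPAN{\ECat}$. The elimination rule of the refinement type then immediately supplies $\underline{\Con{run}_w\,h\,\bbrk{M}}:\Con{WeaklyBisimilar}\Sub{\mathbb{N}}\prn{\Con{run}_w\,h\,\bbrk{M}}$, and it only remains to identify the $\Opn$-restriction of $\Con{run}_w\,h\,\bbrk{M}$ with the displayed glued element.

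The crux of the argument, and its only real content, is this last identification. Since $\TpBisim*$ satisfies $\OpMod{\Opn}{\TpBisim{A}=\TpL{A}}$ and $\Con{run}$ is defined uniformly using the Kleisli extension of $\TpBisim*$, restricting $\Con{run}_w\,h\,\bbrk{M}$ along $\OpnL$ replaces the $\TpBisim*$-bind by the ordinary $\TpL*$-bind and replaces $\bbrk{M}$ by its open-modal restriction $\bbrk{M}\SubL$; by the definition of $\Con{run}$ this is exactly $\Con{run}_w\,h\,\bbrk{M}\SubL$, and symmetrically for $\OpnR$. Using the disjointness of $\OpnL,\OpnR$ I conclude $\Con{run}_w\,h\,\bbrk{M}=\brk{\OpnL\hookrightarrow\Con{run}_w\,h\,\bbrk{M}\SubL,\OpnR\hookrightarrow\Con{run}_w\,h\,\bbrk{M}\SubR}$ as $\Opn$-partial elements of $\TpL{\mathbb{N}}$, whence $\underline{\Con{run}_w\,h\,\bbrk{M}}$ has precisely the asserted type. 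I expect the main (though still essentially routine) obstacle to be this naturality bookkeeping: verifying that the open-modal restrictions commute with $\Con{run}$ and with the interpretation $\bbrk{-}$, i.e.\ that a single span-level term genuinely restricts to the two expected one-sided executions. All of the genuine difficulty has already been discharged in the construction of the weak bisimulation monad $\TpBisim*$ in \cref{sec:synthetic-weak-bisimulation}, so that at this point the theorem follows essentially by unfolding definitions.
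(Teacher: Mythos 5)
Your proposal is correct and follows essentially the same route as the paper, which likewise obtains the witness by running $\bbrk{M}$ at $(w,h)$ to get an element of $\TpBisim{\mathbb{N}}$ and then reading off the $\Con{WeaklyBisimilar}$ component supplied by the refinement-type structure of $\TpBisim{\mathbb{N}}$. The paper leaves the theorem as an immediate consequence of the preceding discussion of $\Con{run}$; your additional care about identifying the $\Opn$-restriction of $\Con{run}_w\,h\,\bbrk{M}$ with the glued pair of one-sided executions is exactly the bookkeeping the paper elides.
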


\section{Conclusions and future work}\label{sec:concl}

\paragraph{Dependent type theory with higher-order store}

We have defined denotational semantics for higher-order store and polymorphism
in impredicative guarded dependent type theory (\iGDTT{}); this denotational
semantics, in turn, justifies the extension of impredicative guarded dependent
type theory with general reference types and a (higher-order) state monad, a
theory that we call \iGDTTRef{}. Although denotational semantics for
higher-order state are \emph{a priori} desirable in general, they are
essentially mandatory in the case of dependent type theory --- an area in which
operational methods have proved disappointingly unscalable.

\paragraph{Relational reasoning atop an intensional equational theory}

The equational theory of store that we model is, however, quite intensional:
reads from the store leave behind ``abstract steps'' that cannot be ignored.
Thus in order to support reasoning about stateful computation, we have extended
\iGDTTRef{} with constructs for proof-relevant relational reasoning based on
the \DefEmph{logical relations as types} (LRAT) principle: in \iGDTTRefLRAT{},
it is possible to exhibit correspondences between stateful computations that
combine weak bisimulation for computation steps with parametricity for abstract
data types. Because we have ``full-spectrum'' dependent type theory at our
disposal, bisimulation arguments that in prior work required unfolding a very
complex operational or denotational semantics can be carried out totally na\"ively within the
type theory.

\subsection{Future work}

\subsubsection{Swapping and dropping allocations}

Our account of correspondences on worlds and stateful computations is somewhat
provisional; although we support storing relational invariants in the heap and
working ``up to'' computation steps, it is not possible in our model to exhibit
a correspondence between two programs that \emph{allocate} differently, even if
this does not affect their observable behavior. Two things are currently
missing from our model:
\begin{enumerate}

  \item \emph{Swapping of independent allocations.} To identify
    $\prn{\TmBind{r}{\TmNew{\mathbb{Z}}{5}}\TmBind{s}{\TmNew{\mathbb{Z}}{6}}
    u\, r\, s}$ with the reordered program
    $\prn{\TmBind{s}{\TmNew{\mathbb{Z}}{6}}\TmBind{r}{\TmNew{\mathbb{Z}}{5}}
    u\, r\, s}$, we would need the relational interpretation of worlds to allow
    the left-hand world to differ from the right-hand world by a permutation.

  \item \emph{Dropping of inactive allocations.} Ours is a model of
    \emph{global} state and allocation: thus the denotations of
    $\prn{\TmBind{r}{\TmNew{\mathbb{Z}}{5}} \TmRet{\prn{}}}$ and $\TmRet\prn{}$
    are distinct. One of our goals for future work is to extend our relational
    layer to account for such identifications, potentially by adapting the
    ideas of Kammar~\etal~\cite{kammar-levy-moss-staton:2017} who have themselves extended
    the classic Plotkin--Power account of local
    state~\cite{plotkin-power:2002} to support storage of pointers. We believe
    such an adaptation is plausible, but many questions remain unanswered in
    the space between storage of ground types and pointers with syntactic
    worlds and full thunk storage with semantic worlds.

\end{enumerate}

We believe that it may be possible to address the limitations outlined above
using \emph{nominal}
techniques~\cite{gabbay-pitts:2002,pitts:2013,pitts:2016}; it is also worth
considering the relationship between the \emph{logical relations as types}
principle and the proof-relevant logical relations for swapping and dropping
employed by Benton, Hofmann, and Nigam~\cite{benton-hofmann-nigam:2013}.

\subsubsection{More sophisticated Kripke worlds}

We also expect that it may be profitable to build on prior work in the
operational semantics of
state~\cite{ahmed-dreyer-rossberg:2009,dreyer-neis-rossberg-birkedal:2010}
involving yet more sophisticated Kripke worlds to facilitate local reasoning on
the heap. Such sophistication inevitably draws one in the direction of proper
program logics, however, which is another area of future work.

\subsubsection{Higher-order separation logic for modular reasoning about stateful programs}

Closely related to the question of Kripke worlds broached above, it is
reasonable to consider layering atop our type theory the higher-order
separation logic of Iris~\cite{iris:2015} to support modular reasoning about
stateful programs. Subsequent to circulation of the present paper in draft
form, Aagaard, Sterling, and Birkedal~\cite{aagaard-sterling-birkedal:2023}
have indeed defined exactly such a program logic for the denotational model of
System~\FMuRef{} given in \cref{sec:easy}, exhibiting a suitable
BI-hyperdoctrine over $\brk{\WICat,\VICat}$ internal to \iGDTT{}. It remains an
open question how to adapt this separation logic to the more sophisticated
model of dependent type theory with reference types given in
\cref{sec:semantics}; although logic-enriched dependent type theory is
well-understood~\cite{jacobs:1999,phoa:1992,taylor:1999}, we expect that
satisfactorily developing the theory of higher-order separation logic over
dependent types may require some new ideas.

\bibliographystyle{alphaurl}
\bibliography{references/refs-bibtex}

\end{document}